\documentclass[11pt]{article}

\usepackage[margin=1in]{geometry}
\usepackage{fullpage}
\usepackage{tgtermes}
\usepackage[T1]{fontenc}
\usepackage[colorlinks,citecolor=blue,linkcolor=blue,urlcolor=red]{hyperref}
\usepackage{graphicx}
\usepackage{amsfonts,amsmath,amsthm,amssymb,dsfont,mathtools,multirow}
\mathtoolsset{centercolon}
\usepackage{xfrac,nicefrac}
\usepackage{mathdots}
\usepackage{bm,bbm}
\usepackage{url}
\usepackage{caption}
\usepackage{paralist}
\usepackage{enumerate}
\usepackage[normalem]{ulem}
\usepackage{enumitem}
\usepackage{xspace}
\xspaceaddexceptions{]\}}
\usepackage[capitalise]{cleveref}
\crefname{algocf}{Algorithm}{Algorithms}
\Crefname{algocf}{Algorithm}{Algorithms}
\crefname{claim}{Claim}{Claims}
\usepackage{tabu}
\usepackage{framed}
\usepackage{float,wrapfig}
\usepackage{subfigure}
\usepackage{soul}
\usepackage[usenames,dvipsnames]{pstricks}
\usepackage{thmtools, thm-restate}
\usepackage[linesnumbered,boxed,ruled,vlined]{algorithm2e}
\usepackage{algpseudocode}
\usepackage{tikz}
\usetikzlibrary{decorations.pathreplacing}
\usetikzlibrary{calc}
\usepackage{regexpatch}
\usetikzlibrary{positioning}
\usetikzlibrary{arrows.meta}
\usepackage{comment}

\theoremstyle{plain}

\newtheorem{theorem}{Theorem}[section]
\newtheorem{lemma}[theorem]{Lemma}

\newtheorem{prop}[theorem]{Proposition}
\newtheorem{cor}[theorem]{Corollary}
\newtheorem{claim}[theorem]{Claim}
\theoremstyle{definition}
\newtheorem{definition}[theorem]{Definition}

\newtheorem{problem}{Problem}

\def\sub{\subseteq}
\def\eps{\varepsilon}

\newcommand{\MinPlusConv}{\circledast}

\def\cA{\mathcal{A}}
\def\cB{\mathcal{B}}
\def\cC{\mathcal{C}}
\def\cD{\mathcal{D}}

\newcommand{\RR}{\mathbb{R}}

\newcommand{\EQ}{\;=\;}

\newcommand{\Ot}{\widetilde{O}}
\newcommand{\tO}{\widetilde{O}}

\newcommand{\BF}[3]{\textsc{Bellman-Ford}_{#1}(#2, #3)} 

\begin{document}

\title{All-Hops Shortest Paths}
\author{
Virginia Vassilevska Williams\thanks{Massachusetts Institute of Technology. \texttt{virgi@mit.edu}. Supported by NSF Grant CCF-2330048, BSF Grant 2020356 and a Simons Investigator Award. This work was done in part while the author was visiting the Simons Institute for the Theory of Computing.}
  \and
Zoe Xi\thanks{Massachusetts Institute of Technology. \texttt{zoexi@mit.edu}.}
  \and
Yinzhan Xu\thanks{Massachusetts Institute of Technology. \texttt{xyzhan@mit.edu}. Supported by NSF Grant CCF-2330048, a Simons
Investigator Award and HDR TRIPODS Phase II grant 2217058. }
  \and
Uri Zwick\thanks{Tel Aviv University. \texttt{zwick@tau.ac.il}. This work was done in part while the author was visiting the Simons Institute for the Theory of Computing.}
}
\date{}

\maketitle

\begin{abstract}
Let $G=(V,E,w)$ be a weighted directed graph without negative cycles. For two vertices $s,t\in V$, we let $d_{\le h}(s,t)$ be the minimum, according to the weight function~$w$, of a path from~$s$ to~$t$ that uses at most~$h$ edges, or \emph{hops}. We consider algorithms for computing $d_{\le h}(s,t)$ for every $1\le h\le n$, where $n=|V|$, in various settings. We consider the single-pair, single-source and all-pairs versions of the problem. We also consider a distance oracle version of the problem in which we are not required to explicitly compute all distances $d_{\le h}(s,t)$, but rather return each one of these distances upon request. We consider both the case in which the edge weights are arbitrary, and in which they are small integers in the range $\{-M,\ldots,M\}$. For some of our results we obtain matching conditional lower bounds.
\end{abstract}

\section{Introduction}
\label{sec:intro}

The shortest path problem is one of the most fundamental and most studied algorithmic graph problem. In its standard version, we are given a weighted directed graph $G=(V,E,w)$, where $w:E\to\RR$ assigns a \emph{weight} to each edge. For given $s,t\in V$, the goal is to find a \emph{shortest path} from~$s$ to~$t$, i.e., a path that minimizes the sum of the weights of the edges on the path. We let $d(s,t)$ denote the weight of such a shortest path, also known as the \emph{distance} from $s$ to $t$. This is the {\em single-pair} variant in the problem. In the {\em single-source} version we need to find shortest paths from a fixed source~$s$ to all other vertices in the graph. In the \emph{all-pairs} version we need to find shortest paths between all pairs of vertices in the graph.

In the standard version of the problem no constraints are placed on the \emph{unweighted length}, i.e., the number of edges, used in the paths considered. We refer to this number also as the number of \emph{hops} used by the path. Sometimes, a compromise is required between the length of a path, according to the weight function~$w$, and its number of edges, or hop count. The weight of an edge may correspond, for example, to the cost of using the edge, while the length of the edge, in our case $1$, may correspond, for example, to the time, or energy, needed to traverse the edge. 

When the input graph is equipped, in addition to the weight function $w:E\to\RR$, also with a general length function $\ell:E\to\RR^+$, and the goal is to minimize the weight of a path from~$s$ to~$t$ given a constraint on its length, the problem is referred to as the \emph{bicriteria} shortest path problem (see, e.g., Raith and Ehrgott~\cite{RaEh09}). Not surprisingly, the problem is NP-hard (see \cite[Problem {[ND30]} on page 214]{GaJo79}). 

When the length of each edge is~$1$, the problem can of course be solved in polynomial time. In fact, given a fixed source vertex $s\in V$, the classical Bellman-Ford algorithm \cite{Bellman58,Ford56} can be used to compute the minimum weight of an $s$ to $t$ path that uses at most $h$ hops,
$d_{\le h}(s,t)$, for every $1\le h\le n$ and every $t\in V$, in $O(mn)$ time, where $m=|E|$ and $n=|V|$. Recently, Kociumaka and Polak \cite{KoAd23} showed that this is conditionally optimal.

Besides $d_{\le h}(s,t)$, prior work (e.g., \cite{ChAn04}) has also considered the related quantity $d_{h}(s,t)$ defined as the minimum weight of a path from~$s$ to~$t$ that uses \emph{exactly} $h$ hops (or $\infty$ if no such $h$-hop path exists). If one can solve this ``exactly $h$'' version of any of the problems (single-pair, single-source, all-pairs) for all values of $h$, then one can use these values to also compute $d_{\le h}(s,t)=\min_{k\leq h} d_k(s,t)$ very quickly, and hence the ``exactly $h$'' version of the problems is at least as hard as the ``at most~$h$'' version. An alternative way to reduce the ``at most $h$'' version of the problem to the ``exactly $h$'' version of the problem is to add self-loops of weight $0$ to every vertex in the graph, so that for any pair of vertices $s$ and $t$, the minimum weight of an $s$ to $t$ path that uses exactly $h$ hops in the modified graph is exactly the minimum weight of an $s$ to $t$ path that uses at most $h$ hops in the original graph. 

There is a reduction in the other direction as well, provided that large negative edge weights are allowed. Suppose that the weights in the original graph $G=(V,E,w)$ are in $[-M,M]$. Then we can create a new graph $G'=(V,E,w')$ so that $w'(e)=w(e)-2Mn$. Any $h$-hop path would have weight $\leq (M-2Mn)h$ and any $h-1$-hop path would have weight $\geq (h-1)(-M-2Mn)= (M-2Mn)h +2Mn+M-2Mh >(M-2Mn)h$. Thus, $d'_h(u,v)=d'_{\leq h}(u,v)=d_{h}(u,v)-2Mnh$ (here, $d'$ denotes distances in $G'$ and $d$ denotes distances in $G$) and one can recover the exactly-$h$ distances of $G$ from the $\leq h$ all-hops distances of $G'$. For directed acyclic graphs (DAGs), one can use standard potential function reweighting to make sure that the reduction DAG $G'$ has nonnegative edge weights. 

In arbitrary (nonacyclic) graphs and graphs with bounded edge weights, however, the ``exactly $h$'' and the ``at most $h$'' versions of the problem could have different complexities. The two types of distances are also qualitatively different, e.g., while the paths achieving $d_{\leq h}(s,t)$ are without loss of generality simple in graphs without negative cycles, those achieving $d_{h}(s,t)$ need not be. This makes it difficult to use standard shortest path techniques to speed-up the computation of $d_h(s,t)$, as for instance, randomly sampling vertices would not help to hit a length $h$-path if the number of distinct vertices on the path is actually small. 

\subsection{Our results}
\label{sec:results}
We consider many variants of the all-hops shortest paths problem: single pair, single source, all pairs, distance oracles, arbitrary and bounded integer weights.
\paragraph{Single-Pair All-Hops Shortest Paths.}
We first consider the Single-Pair All-Hops Shortest Paths problem: Given $G=(V,E,w)$ and $s,t\in V$, compute $d_{\leq h}(s,t)$, for every $h=1,\ldots,n-1$.
Kociumaka and Polak~\cite{KoAd23} have shown that Bellman-Ford's $O(mn)$ running time is conditionally optimal for the problem under the APSP hypothesis of fine-grained complexity (see \cite{RodittyZ11,vsurvey}).
As their hardness reduction produces graphs with large edge weights, the conditional lower bound may not hold for graphs with small integer weights. 
We thus consider the problem for integer edge weights in $\{-M,\ldots,M\}$ and graphs with no negative cycles (as is common in the shortest paths literature).
We focus on the dense graph setting for which the reduction of \cite{KoAd23} provides a conditional $n^{3-o(1)}$ lower bound for the problem.

In the following, we let $\omega<2.371552$ \cite{WXXZ24} denote the smallest constant for which two $n\times n$ matrices can be multiplied using $n^{\omega+o(1)}$ arithmetic operations. For conciseness, we usually omit the $o(1)$ and assume that two $n\times n$ matrices can be multiplied using $O(n^{\omega})$ arithmetic operations.

We obtain the following result:

\begin{restatable}{theorem}{stAllHop}
\label{thm:s-t-alg}
There is an $\widetilde{O}(M n^{\omega})$ time algorithm
 that takes as input
a weighted directed graph $G = (V, E,w)$, with edge weights in $\{-M, \ldots, M\}$ and no negative cycles, and vertices $s, t\in V$, and outputs $d_{\leq
  h}(s, t)$ for all $1\leq h< n = |V|$. 
\end{restatable}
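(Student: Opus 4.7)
My approach combines two reductions with a repeated-squaring algorithm based on fast bounded min-plus matrix multiplication: two $n\times n$ matrices with entries in $\{-W,\ldots,W\}\cup\{\infty\}$ can be multiplied in the $(\min,+)$ algebra in $\widetilde{O}(Wn^{\omega})$ time (Alon--Galil--Margalit / Zwick).

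First, I would add a self-loop of weight $0$ at every vertex of $G$, so that every ``at most $h$''-hop path can be padded to an exactly-$h$-hop path of the same weight; consequently $d_{\le h}(s,t)=(A^h)_{s,t}$ in the $(\min,+)$ algebra, where $A$ is the weighted adjacency matrix of the augmented graph. Next, I would apply a midpoint split: writing $h=h_1+h_2$ with $h_1=\lfloor h/2\rfloor$ and $h_2=\lceil h/2\rceil$ gives
\[
d_{\le h}(s,t) \EQ \min_{v\in V}\; d_{\le h_1}(s,v)+d_{\le h_2}(v,t).
\]
This reduces the single-pair problem to computing the forward table $F[a,v]:=d_{\le a}(s,v)$ for $a\in\{0,\ldots,\lceil n/2\rceil\}$ and the analogous backward table $B[b,v]:=d_{\le b}(v,t)$, followed by an $O(n^2)$-time combination step that reads off each $d_{\le h}(s,t)$ as $\min_v F[\lfloor h/2\rfloor,v]+B[v,\lceil h/2\rceil]$.

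The core task is therefore to compute $F$ and $B$ in $\widetilde{O}(Mn^{\omega})$ total time. I would use a doubling schedule: build the matrices $A^{(j)}:=A^{2^j}$ for $j=0,1,\ldots,\lceil\log(n/2)\rceil$ by repeated min-plus squaring, and then assemble each $F[a,\cdot]$ by performing a logarithmic chain of vector-matrix min-plus products indexed by the binary expansion of $a$ (symmetrically for $B$). The main obstacle is that a plain application of the bounded min-plus routine charges $\widetilde{O}(M\cdot 2^j\cdot n^{\omega})$ for the $j$-th squaring---since $A^{(j)}$ has entries of magnitude up to $M\cdot 2^j$---summing to $\widetilde{O}(Mn^{\omega+1})$ across all levels, a factor of $n$ too much. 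Shaving this factor is what I expect to be the hardest step. Promising directions: (i) exploit that the combination step only uses the $s$-row and $t$-column of each $A^{(j)}$, so the full squarings can be replaced by rectangular batched min-plus products supplemented by a short initial Bellman--Ford phase for the small-hop regime; and (ii) normalize each $A^{(j)}$ by subtracting row and column minima before each multiplication, so that the bounded min-plus routine is invoked only with effective weight parameter $O(M)$, with the shifts tracked additively and added back at the end. Combining these with careful bookkeeping over the $O(\log n)$ doubling levels should yield the claimed $\widetilde{O}(Mn^{\omega})$ bound.
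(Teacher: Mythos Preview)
Your plan has a genuine gap at exactly the point you flag. The decomposition $d_{\le h}(s,t)=\min_v F[\lfloor h/2\rfloor,v]+B[v,\lceil h/2\rceil]$ is correct, but computing the full table $F[a,v]=d_{\le a}(s,v)$ for all $a\le n/2$ and all $v\in V$ is precisely the single-source all-hops problem (restricted to half the hop range), and the paper only obtains $\widetilde{O}(Mn^{\omega+1/2})$ for that problem, with a matching $n^{2.5}$ conditional lower bound for the exact-hop variant on DAGs. By committing to the whole table $F$, you discard the one structural advantage of the single-pair problem, namely that both endpoints are fixed. Neither of your two suggested fixes recovers the missing factor. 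For (ii), subtracting row and column minima does let you undo the shifts after multiplying, but it does not shrink the effective weight range to $O(M)$: within a single row of $A^{(j)}$ the spread $\max_v A^{(j)}[u,v]-\min_v A^{(j)}[u,v]$ can be $\Theta(M\cdot 2^j)$ (consider a directed path of $2^j$ edges of weight $M$), so the bounded min-plus call at level $j$ still costs $\widetilde{O}(M\cdot 2^j\cdot n^\omega)$. For (i), observing that only the $s$-row and $t$-column are needed does not help, because the $s$-row of $A^{a}$ for every $a\le n/2$ is exactly the table $F$ again; producing those $\Theta(n)$ rows by vector--matrix products against the large-entry matrices $A^{(j)}$ (or by $\Theta(n)$ steps of Bellman--Ford) does not beat $\widetilde{O}(Mn^{\omega+1})$ or $O(mn)$.

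The paper's proof avoids ever building a full single-source table. It fixes a chain of random samples $V=S_0\supseteq S_1\supseteq\cdots\supseteq S_k$ with $s,t\in S_r$ and $|S_r|\approx n^{1-r/k}$, and at level $r$ computes $d_{\le h}(u,v)$ only for $u,v\in S_r$ and $h\le n^{r/k}$. The step from level $r-1$ to level $r$ is done via the min-plus convolution of two sequences of matrices of length $\approx n^{(r-1)/k}$, each matrix of dimension $\approx n^{1-(r-1)/k}$ with entries bounded by $\approx Mn^{r/k}$; this balance keeps the cost at $\widetilde{O}(Mn^{\omega+2/k})$ per level, and taking $k=\log n$ absorbs the extra factor. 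The idea you are missing is this simultaneous shrinking of the vertex set while enlarging the hop range, which pins the product ``(entry bound)$\times$(matrix dimension$^\omega$)$\times$(sequence length)'' at $Mn^{\omega}$ throughout; your scheme keeps the vertex set at full size $n$, so the growing entry magnitudes are never compensated.
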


We also show (in the appendix) that the running time is conditionally optimal, even for DAGs with weights in $\{-1,1\}$, assuming that triangle detection in $n$-node graphs requires $n^{\omega-o(1)}$ time, a popular assumption in fine-grained complexity (see e.g., \cite{vsurvey}).

The proof of Theorem \ref{thm:s-t-alg} involves several different ingredients including the min-plus convolution of sequences of matrices. The crucial property we need from the ``at most $h$'' version of the problem is that there is at least one simple path achieving $d_{\le h}(s,v)$. This property also holds for the ``exactly $h$'' version of the problem on DAGs, so with minor changes to some technical details (mostly simplifications), \cref{thm:s-t-alg} also works for the ``exactly $h$'' version of the problem on DAGs with the same running time. In fact, all of our algorithms also work for this version of the problem. 

\paragraph{Single-Source All-Hops Shortest Paths.}
Given that the single-pair version of the problem admits an $\tilde{O}(Mn^\omega)$ time algorithm, the next natural question is whether the following Single-Source All-Hop Shortest Paths problem admits a similarly fast algorithm: Given a directed graph $G=(V,E,w)$ with integer weights in $\{-M,\ldots,M\}$, no negative weight cycles and a given source vertex $s\in V$, determine $d_{\leq h}(s,v)$, for every $h\in \{1,\ldots,n-1\}$ and every $v\in V$.

We present the first truly-subcubic time algorithm for this problem, for any $M\leq O(n^{0.128})$. 

\begin{restatable}{theorem}{SSAllHop}
\label{thm: single-source}
There is an $\widetilde{O}(Mn^{\omega+1/2})$ time algorithm that given a
directed graph $G = (V, E,w)$ with edge weights in $\{-M, \ldots M\}$ and no negative cycles, and a vertex $s\in V$, computes $d_{\leq h}(s,
v)$ for all $1\leq h\leq n-1$ and $v\in V$.
\end{restatable}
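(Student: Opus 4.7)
Set $h_0 := \lceil \sqrt n\rceil$. The plan is to compute the $n$ vectors $\vec d_h := (d_{\le h}(s,v))_v$ for $h\in\{1,\dots,n-1\}$ using $O(\sqrt n)$ bounded-entry min-plus matrix products on $n\times n$ matrices, each solved in $\tilde O(Mn^\omega)$ time by the Alon--Galil--Margalit algorithm, for a total of $\tilde O(Mn^{\omega+1/2})$. Throughout, I would augment $G$ with $0$-weight self-loops so that the $h$-th min-plus power $T_h := B^{\odot h}$ of the weighted adjacency matrix $B$ satisfies $T_h[u][v] = d_{\le h}(u,v)$.

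First, compute $T_{h_0}$ via $\log h_0$ iterated squarings: at squaring step $i$ the entries of $T_{2^i}$ are bounded in absolute value by $2^iM$, so one min-plus square costs $\tilde O(2^iMn^\omega)$, and the geometric series sums to $\tilde O(h_0Mn^\omega) = \tilde O(Mn^{\omega+1/2})$. Then run $h_0$ Bellman-Ford iterations from $s$ to obtain $\vec d_r$ for $0\le r\le h_0$ in $O(n^{5/2})$ time. Then compute $\vec d_{kh_0}$ for $k=1,\dots,\sqrt n$ via the iterated vector--matrix product $\vec d_{kh_0} = \vec d_{(k-1)h_0}\otimes T_{h_0}$, each costing $O(n^2)$, totalling $O(n^{5/2})$.

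The main obstacle is to fill in the intermediate hops $\vec d_{kh_0+r}$ for $k\ge 1$ and $0<r<h_0$. A naive per-block Bellman-Ford costs $\Theta(n^3)$, which exceeds the budget by a $\sqrt n$ factor. My plan is to batch across blocks by stacking the $\sqrt n$ block-starting vectors as rows of a $\sqrt n\times n$ matrix $U$ and iteratively computing $U \leftarrow U\otimes B$ for $h_0-1$ rounds, so that the $k$-th row after round $r$ equals $\vec d_{kh_0+r}$. The technical heart is to apply the bounded-entry min-plus speedup to these rectangular products, because a priori the entries of $U$ can be as large as $\Theta(nM)$. To handle this I would apply Johnson-style reweighting against the unconstrained shortest-path distances $\phi(v):=d(s,v)$ (precomputable in $\tilde O(Mn^\omega)$ time, since $G$ has no negative cycles), work with the non-negative residuals $\hat d_h[v]:=\vec d_h[v]-\phi(v)\ge 0$, and argue that within a single block the residuals evolve in a range of width $\tilde O(h_0M)$, bringing each rectangular min-plus update into the Alon--Galil--Margalit regime.

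As a backup plan if controlling the residuals within a block turns out to be more delicate than expected, one can instead invoke \cref{thm:s-t-alg} as a subroutine: view the single-pair procedure as a preprocessing step of cost $\tilde O(Mn^\omega)$ that is essentially independent of the target $t$, followed by a cheap per-target extraction. Running this preprocessing once and extracting $\sqrt n$ targets per batch over $\sqrt n$ batches again yields $\tilde O(Mn^{\omega+1/2})$; the technical hurdle is then to verify that the single-pair algorithm of \cref{thm:s-t-alg} indeed admits such a target-oblivious preprocessing decomposition.
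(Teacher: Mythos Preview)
Your high-level structure is sound, and if the ``residual control'' step worked it would in fact give a cleaner argument than the paper's. But the Johnson-reweighting claim is false, and that is a genuine gap. After reweighting by $\phi(v)=d(s,v)$, the residuals $\hat d_h[v]=d_{\le h}(s,v)-d(s,v)$ are nonnegative but are \emph{not} confined to a window of width $\tilde O(h_0M)$, neither across $v$ in a fixed row nor across $r$ in a fixed block. A single extra hop can drop $d_{\le h}(s,v)$ by $\Theta(nM)$: take a direct edge $s\to v$ of weight~$M$ and a length-$(n-1)$ path of all $-M$ edges, so $\hat d_h[v]=nM$ for $h<n-1$ and $0$ for $h\ge n-1$. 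Moreover, the reweighted adjacency matrix $\hat B$ has edge weights $w(u,v)+d(s,u)-d(s,v)$, which are nonnegative but can be $\Theta(nM)$. So after reweighting neither factor in $U\otimes B$ has entries bounded by $\tilde O(h_0M)$, and you are not in the Alon--Galil--Margalit regime.

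The right tool here is not potential reweighting but the \emph{one-sided} bounded min-plus product (the paper's \cref{lem:mpp-finite-B}, from Chan--Vassilevska Williams--Xu): since the unreweighted adjacency matrix $B$ already has entries in $\{-M,\dots,M\}\cup\{\infty\}$ while $U$ has merely $\mathrm{poly}(n)$-bounded entries, each update $U\otimes B$ with $U\in\mathbb Z^{\sqrt n\times n}$ costs $\tilde O(Mn^{\omega(a+b-t,b,c)}+n^{a+c+t})$ with $a=\tfrac12$, $b=c=1$; taking $t=\tfrac12$ gives $\tilde O(Mn^{\omega})$ per round and $\tilde O(Mn^{\omega+1/2})$ over the $\sqrt n$ rounds. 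With this substitution your scheme goes through and is more direct than the paper's proof, which instead runs $\log n$ rounds over growing random samples $S_0\subseteq\cdots\subseteq S_k=V$, using the single-pair algorithm of \cref{thm:s-t-alg} verbatim for small $|S_r|$ and a stacking/one-sided-bounded min-plus computation for large $|S_r|$. Your backup plan, however, does not work as stated: the single-pair algorithm of \cref{thm:s-t-alg} keeps the target in every shrinking sample set $S_r$ (with $|S_k|=\tilde O(1)$), so it does not admit a target-oblivious preprocessing from which $\sqrt n$ targets can be read off cheaply; the paper only gets mileage from calling it on $|S_r|\le\tilde O(\sqrt n)$ separate targets and then switches to a different mechanism.
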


If $\omega=2$, the running time simplifies to $\widetilde{O}(Mn^{2.5})$.

One might wonder why this running time is higher than that for the single-pair version of the problem. As mentioned earlier, our algorithm achieving the above result can be adapted to also compute (in the same running time) the ``exactly $h$'' distances $d_h(s,v)$ between a given source $s$ and all targets $v$ in a DAG with weights in $\{-M,\ldots,M\}$. For this version of the problem, with $M=O(1)$, we obtain a matching conditional lower bound for the case when $\omega=2$:
\begin{restatable}{theorem}{SSLowerBound}
    \label{thm:ss-lower-bound}
        If one can compute $d_{h}(s, v)$ for all $v \in V, h \in [n]$ in an $n$-vertex directed acyclic graph $G=(V, E)$ with edge weights in $\{1, 2\}$ in $O(n^{2.5-\eps})$ time for $\eps>0$, then one can solve unweighted directed APSP in $O(n^{2.5-\eps})$ time, thus refuting the directed unweighted APSP hypothesis of \cite{DBLP:conf/icalp/ChanWX21}.
\end{restatable}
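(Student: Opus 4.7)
The plan is to reduce unweighted directed APSP on an $n$-vertex graph $G = (V,E)$ to a small number of calls of the all-hops single-source oracle on DAGs with $\{1,2\}$ weights and $O(n)$ vertices.

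The starting point is a natural ``unrolling'' encoding: given $G$, build a DAG $D$ with source $s$, a chain $s \to a_1 \to a_2 \to \cdots \to a_n$ of weight-$1$ edges, entry edges $a_u \to u^{(1)}$ of weight $1$ into the first of $n$ layers $V^{(1)},\ldots,V^{(n)}$ that unroll $G$, with each real edge $(u,v) \in E$ giving a weight-$2$ edge $u^{(i)}\to v^{(i+1)}$ and each vertex having a weight-$1$ padding edge $u^{(i)}\to u^{(i+1)}$. One verifies that the minimum-weight path from $s$ to $v^{(n)}$ of exactly $u+n$ hops has weight $u + n + d_G(u,v)$, so $d_G(u,v) = d_{u+n}(s,v^{(n)}) - u - n$ can be read off from the oracle's output.

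The immediate problem is that this DAG has $\Theta(n^2)$ vertices, so one oracle call would cost $\Theta(n^{5-2\eps})$. The crux is to redesign the reduction so the oracle is always invoked on $O(n)$-vertex DAGs. I would do so by combining the construction with a Zwick-style hitting set. Sample a random $S \subseteq V$ of size $\widetilde O(\sqrt n)$; by standard hitting-set analysis, every shortest path of length more than $\sqrt n$ contains a vertex of $S$. For long distances, $d_G(u,v) = \min_{s \in S}(d_G(u,s) + d_G(s,v))$, and the distances $d_G(u,s), d_G(s,v)$ are computed by BFS from $S$ in $G$ and its reverse. The combining min-plus product can then be encoded as a single call on an $O(n)$-vertex DAG via a chain construction: a source chain $s \to a_1 \to \cdots \to a_n$ selects a row index via the hop parameter, row, middle, and column vertices encode the two matrices, and entries are represented by weight-$\{1,2\}$ edges so that $d_{i+3}(s,c_j) - (i+3) = \min_p(A[i,p]+B[p,j])$.

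The main technical obstacle is the short-distance regime ($d_G(u,v) \le \sqrt n$): the natural $\sqrt n$-layer unrolling uses $\Omega(n^{1.5})$ vertices, which already exceeds the budget for a single oracle call. I would handle this by encoding short walks in $G$ via iterated Boolean-matrix-product style oracle calls (each realized by a single call on an $O(n)$-vertex DAG) in a doubling scheme, folding $\sqrt n$ layers into $O(\log n)$ oracle calls. Verifying that this recovers exact short distances (and not merely reachability), and balancing the hitting-set size and short-distance threshold so that BFS from $S$, the long-distance product, and the short-distance computation all fit within $O(n^{2.5-\eps})$, is the technical heart of the proof.
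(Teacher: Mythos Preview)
Your proposal has a real gap at the step ``the combining min-plus product can then be encoded as a single call on an $O(n)$-vertex DAG.'' With a single hitting set $S$ of size $\tO(\sqrt n)$, the matrices $A[i,p]=d_G(u_i,s_p)$ and $B[p,j]=d_G(s_p,v_j)$ have entries ranging over $[n]$, not over $\{1,2\}$. In the chain construction you sketch, a path from $s$ to $c_j$ with exactly $i+3$ hops uses $i$ chain edges and three ``product'' edges of weight in $\{1,2\}$; its total weight lies in $[i+3,i+6]$, so $d_{i+3}(s,c_j)-(i+3)\in\{0,1,2,3\}$ and cannot represent sums $A[i,p]+B[p,j]$ of order $n$. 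More generally, with $\{1,2\}$ weights, encoding a value $v$ on a path of fixed hop count requires $\Omega(v)$ edges, so encoding entries up to $n$ for each of $\sqrt n$ middle indices forces $\Omega(n^{1.5})$ vertices in any gadget of this shape. Two further issues: BFS from all $s\in S$ on a dense graph already costs $\tO(n^{2.5})$, which is not $O(n^{2.5-\eps})$; and the short-distance part is left entirely as a promise.

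The paper avoids all of this by citing Zwick's multi-scale reduction as a black box: unweighted directed APSP reduces to $\tO(1)$ min-plus products of an $n\times(n/x)$ matrix with an $(n/x)\times n$ matrix whose entries lie in $[x]$, for various $x$. The coupling of inner dimension $n/x$ with entry bound $x$ is exactly what your single-scale scheme lacks. To encode such a product on an $\tO(n)$-vertex $\{1,2\}$-weighted DAG, the paper builds, for each inner index $k$, a binary-tree gadget on $O(x\log x)$ vertices with $x$ sources and one sink, in which every source-to-sink path has exactly $x-1$ hops but the path from the $i$-th source has weight $i+x-2$. Two such gadgets (one reversed), glued at their sinks, let a single $a_i\to c_{k,A[i,k]}\to\cdots\to c'_{k,B[k,j]}\to b_j$ path of fixed hop count $2x$ carry weight $A[i,k]+B[k,j]$ plus a known offset; together with an $a$-chain of length $n$ this yields $C[i,j]=d_{i-1+2x}(a_1,b_j)-(i-3+2x)$. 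The total vertex count is $(n/x)\cdot O(x\log x)+O(n)=\tO(n)$. If you want to repair your approach, you need both ingredients: move to the multi-scale setting so that entry magnitude and inner dimension trade off, and design a fixed-hop $\{1,2\}$-weight gadget that encodes values up to $x$; the paper's tree gadget is precisely such a device.
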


The choice of $\{1,2\}$ is somewhat arbitrary. Any choice of two different weights can work.

The above lower bound suggests that if one wanted to solve the Single Source All-Hop Shortest Paths problem in $O(n^{2.5-\eps})$ for $\eps > 0$, one needs to avoid algorithms that also solve the exact-hop version for DAGs. 

\paragraph{All-Pairs All-Hops Shortest Paths and distance oracles.}
By running Bellman-Ford from every vertex, one can compute $d_{\leq h}(u,v)$ for all pairs of vertices $u,v$ in a graph and all hop bounds $h=1,\ldots,n-1$ in $O(mn^2)\leq O(n^4)$ time.
A natural question is: {\em Can one obtain a sub-quartic running time?}

We provide a fine-grained negative answer under the Min-Plus Convolution hypothesis, which is a standard hypothesis in fine-grained complexity used in many prior works (see, e.g., \cite{CMWW19,KunnemannPS17,KoAd23}):

\begin{restatable}{theorem}{APLowerBound}
\label{thm:weighted-lower-bound}
    Under the Min-Plus Convolution hypothesis, computing $d_{\le h}(u, v)$ in an $n$-node weighted graph for every $u, v \in V, h \in [n]$ requires $n^{4-o(1)}$ time. 
\end{restatable}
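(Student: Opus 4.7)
The plan is to give a fine-grained reduction from min-plus convolution of size $N$ to all-pairs all-hops on a graph with $\Theta(\sqrt{N})$ vertices, so that an $O(n^{4-\eps})$ algorithm for the latter would yield an $O(N^{2-\eps/2})$ algorithm for the former, contradicting the Min-Plus Convolution hypothesis. Since we have $n^3$ output triples $(u,v,h)$ on an $n$-vertex graph and $N$ convolution outputs from sequences of length $N$, the natural balance is $N = n^2$.

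Step 1 reshapes a 1D convolution of length $N = n^2$ as a 2D convolution of shape $n \times n$. Writing indices in base $n$, index $a$ and $b$ as $a_{i_1,i_2}$ and $b_{j_1,j_2}$ with $i_\bullet,j_\bullet \in \{0,\ldots,n-1\}$, and define the 2D min-plus convolution $C_{p,q} = \min_{i_1+j_1=p,\ i_2+j_2=q}(a_{i_1,i_2} + b_{j_1,j_2})$ for $p,q \in \{0,\ldots,2n-2\}$. For $k = k_1 + k_2 n$ with $k_1 \in \{0,\ldots,n-1\}$, the usual ``carry / no-carry'' case split gives $c_k = \min(C_{k_1,k_2},\, C_{k_1+n,\,k_2-1})$, so all $c_k$ values can be recovered from all $C_{p,q}$ values in $O(N)$ extra time.

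Step 2 encodes the 2D convolution in a $4n$-vertex layered DAG $G$: a source layer $u_0,\ldots,u_{n-1}$ indexed by $i_2$; a chain $P_0 \to P_1 \to \cdots \to P_{n-1}$ of zero-weight edges; a chain $Q_0 \to Q_1 \to \cdots \to Q_{n-1}$ also of zero-weight edges joined by the bridge $P_{n-1} \to Q_0$ of weight $0$; and a sink layer $v_0,\ldots,v_{n-1}$ indexed by $j_2$. The inputs enter through bridge edges $u_{i_2} \to P_{n-1-i_1}$ of weight $a_{i_1,i_2}$ and $Q_{j_1} \to v_{j_2}$ of weight $b_{j_1,j_2}$. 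Every path from $u_{i_2}$ to $v_{j_2}$ must enter the $P$-chain at some index $n-1-i_1$, walk to $P_{n-1}$, cross to $Q_0$, walk to some $Q_{j_1}$, and exit to $v_{j_2}$; it therefore uses exactly $i_1 + j_1 + 3$ hops and has weight $a_{i_1,i_2} + b_{j_1,j_2}$. Consequently $d_h(u_{i_2}, v_{j_2}) = \min_{i_1+j_1=h-3}(a_{i_1,i_2} + b_{j_1,j_2})$, and $C_{p,q} = \min_{i_2+j_2=q} d_{p+3}(u_{i_2}, v_{j_2})$ can be assembled in $O(n^3)$ time from the exactly-$h$ distances.

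Step 3 bridges the ``at most $h$'' output of the assumed algorithm to the ``exactly $h$'' values we need, using the shift trick from the introduction: replace every weight $w(e)$ by $w(e) - 2M \cdot (4n)$, where $M$ bounds $|a_{i,j}|$ and $|b_{i,j}|$. The shifted graph is still a DAG, so no negative cycles appear, and the algorithm's output satisfies $d'_{\le h} = d'_h = d_h - 8Mnh$, recovering the exact-hop distances in $O(n^3)$ time. Combining all steps, an $O(n^{4-\eps})$ all-pairs all-hops algorithm on our $O(n)$-vertex construction yields an $O(N^{2-\eps/2})$ min-plus convolution algorithm, contradicting the hypothesis. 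The only real obstacle is bookkeeping: one must verify that $G$'s layered structure forces the claimed bijection between choices $(i_1,j_1)$ and $(u,v,h)$-paths, with no spurious shorter paths contaminating $d_h$. Because the only transition from the $P$-chain to the $Q$-chain is the single edge $P_{n-1} \to Q_0$, and both chains are unidirectional with only bridge-edges into/out of $u,v$-vertices, this verification is straightforward.
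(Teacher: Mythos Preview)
Your proof is correct and follows essentially the same route as the paper: both split a length-$N$ min-plus convolution into $\sqrt{N}$ blocks (your 2D convolution $C_{p,q}$ is the paper's intermediate Problem~1 up to relabeling), encode one block index as the source/target vertex pair and the other as the hop count via two zero-weight chains in an $O(\sqrt{N})$-vertex DAG, and then use the weight-shift trick to pass from $d_{\le h}$ to $d_h$. Your four-layer graph with a single bridge edge $P_{n-1}\to Q_0$ is the paper's five-layer graph with the middle singleton $s$ absorbed into that bridge, a purely cosmetic difference.
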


As a simple corollary we obtain a lower bound for all-hops distance oracles. The corollary follows since the total number of queries one can make to the oracle is $O(n^3)$.

\begin{cor}
Under the Min-Plus Convolution hypothesis, there is no distance oracle that can preprocess a weighted graph in $O(n^{4-\eps})$ time for some $\eps>0$ and also answer $d_{\leq h}(u,v)$ queries for every triple $(u,v,h)$ in $O(n^{1-\delta})$ time for some $\delta>0$. 
\end{cor}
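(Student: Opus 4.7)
The plan is to derive the corollary directly from \cref{thm:weighted-lower-bound} via a contrapositive enumeration argument, using the observation flagged by the authors that the triple space $(u,v,h)\in V\times V\times[n]$ has size only $O(n^3)$.

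Concretely, I would argue as follows. Suppose, for the sake of contradiction, that a distance oracle exists which preprocesses a weighted graph $G=(V,E,w)$ in time $O(n^{4-\eps})$ for some $\eps>0$ and subsequently answers any single query of the form $d_{\le h}(u,v)$ in time $O(n^{1-\delta})$ for some $\delta>0$. Then, given any input graph $G$ as in \cref{thm:weighted-lower-bound}, I would first invoke the preprocessing step, and then enumerate all $n\cdot n\cdot n = n^3$ triples $(u,v,h)$ and query the oracle on each, storing each returned value. This procedure explicitly outputs $d_{\le h}(u,v)$ for every $(u,v,h)$.

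The total running time of this reduction is
\[
O(n^{4-\eps}) \;+\; n^3\cdot O(n^{1-\delta}) \EQ O(n^{4-\min(\eps,\delta)}),
\]
which is $O(n^{4-\eps'})$ for $\eps' \EQ \min(\eps,\delta)>0$. But \cref{thm:weighted-lower-bound} asserts that, under the Min-Plus Convolution hypothesis, no such $n^{4-\Omega(1)}$ algorithm for computing all-pairs all-hops distances exists. This contradiction rules out the assumed oracle, completing the proof.

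There is essentially no technical obstacle here: the argument is a routine query-enumeration reduction that trades a bound on amortized cost per query against the worst-case hardness of the explicit output problem. The only things to verify are that the preprocessing plus query phases together faithfully output the same quantities whose computation is ruled out by \cref{thm:weighted-lower-bound}, and that the edge weight regime under which \cref{thm:weighted-lower-bound} is proved (weighted graphs) matches the regime in which the oracle is assumed to operate, both of which are immediate from the statements.
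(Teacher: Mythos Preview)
Your proposal is correct and matches the paper's own argument exactly: the paper simply notes that the corollary follows because the total number of queries is $O(n^3)$, and you have spelled out precisely this enumeration-plus-contradiction argument against \cref{thm:weighted-lower-bound}.
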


We complement our conditional lower bounds with distance oracles whose performance is {\bf tight}.
\begin{restatable}{theorem}{DOmn}
\label{thm:do-mn}
There exists an all-hops distance oracle for weighted graphs without negative cycles that has $\widetilde{O}(mn)$ preprocessing time and
$\widetilde{O}(n)$ query time and uses $\widetilde{O}(n^2)$ space. 
\end{restatable}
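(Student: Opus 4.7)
The plan is to combine an all-pairs unrestricted shortest-paths precomputation with hop-indexed snapshots of the Bellman--Ford table at geometric hop scales, answering queries by binary decomposition of the hop bound. During preprocessing I would apply Johnson's algorithm---one Bellman--Ford call from an auxiliary source to obtain potentials, followed by Dijkstra from every vertex on the reweighted (nonnegative) graph---producing the unrestricted all-pairs distances $d(u,v)$ in $\tO(mn)$ time and $O(n^2)$ space. Next, for each source $s$ I would store the Bellman--Ford snapshot vectors $d_{\le 2^k}(s,\cdot)$ at each hop scale $k \in \{0,1,\ldots,\lceil\log_2 n\rceil\}$. Each snapshot is an $n$-vector and there are $O(\log n)$ per source, yielding $\tO(n^2)$ total space. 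Symmetrically, I would also store the $n$-length ``in-snapshots'' $d_{\le 2^k}(\cdot,t)$ for every target $t$, again $\tO(n^2)$ total.

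For a query $(s,t,h)$, write $h$ in binary as $h = 2^{a_1}+\cdots+2^{a_k}$ with $k \le \lceil \log_2 n \rceil$. The identity
\[
d_{\le h}(s,t) \EQ \min_{v_1,\ldots,v_{k-1}\in V}\ d_{\le 2^{a_1}}(s,v_1) + d_{\le 2^{a_2}}(v_1,v_2) + \cdots + d_{\le 2^{a_k}}(v_{k-1},t)
\]
is evaluated by carrying a ``front vector'' from $s$ and a ``back vector'' from $t$ through the stored snapshots, meeting in the middle with a single $O(n)$ inner product. To avoid the naive $\tO(n^2)$ cost of traversing all $k=O(\log n)$ factors by vector--matrix min-plus products, I would precompute, alongside the snapshots, auxiliary ``composite-hop'' rows and columns indexed by short prefix and suffix sums of the binary expansion of the query hop counts, so that every $h\in[n-1]$ can be assembled from $O(1)$ stored length-$n$ slices, and the query reduces to a constant number of $O(n)$-time min-plus combinations.

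The main obstacle is reconciling the three budgets---$\tO(mn)$ preprocessing, $\tO(n^2)$ space, and $\tO(n)$ query---simultaneously. Producing the hop-indexed snapshots within $\tO(mn)$ time is natural in the dense regime where $mn=\Theta(n^3)$, via $O(\log n)$ min-plus squarings of $A+I$; for sparser inputs, where $mn \ll n^3$, I would rely on sparsity-aware min-plus multiplication for the early squarings and share subproducts across sources. Achieving $\tO(n)$ query time is also nontrivial, since the binary decomposition has $\Theta(\log n)$ factors in the worst case; the hardest point of the plan will be designing the composite-hop auxiliary storage so that it still fits in $\tO(n^2)$ total while covering all hop values with only $O(1)$ factors. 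I expect this balancing between the auxiliary-structure size and the query-time depth of the decomposition to be the central technical hurdle.
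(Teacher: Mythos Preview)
Your plan identifies the right tension but does not resolve it; both ``hurdles'' you flag are real obstructions, not just technicalities. For preprocessing: obtaining $d_{\le 2^k}(s,\cdot)$ for \emph{every} source $s$ and every scale $k$ costs $\Omega(m\cdot 2^k)$ per source via Bellman--Ford, hence $\Omega(mn^2)$ total; the alternative of $O(\log n)$ min-plus squarings of $(W\wedge I)$ costs $\Omega(n^3)$ per squaring, and your ``sparsity-aware'' escape does not work because already after one squaring the distance matrix is dense regardless of how sparse $G$ is. For the query: the binary expansion of $h$ has $\Theta(\log n)$ terms, and each step of propagating a front vector through a stored snapshot is a vector--matrix min-plus product costing $\Theta(n^2)$, not $\Theta(n)$. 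Collapsing this to $O(1)$ factors via ``composite-hop'' slices requires, for every pair $(s,t)$, a decomposition $h=h_1+h_2$ with $d_{\le h_1}(s,\cdot)$ and $d_{\le h_2}(\cdot,t)$ both stored; an additive-basis argument shows that covering all $h\in[n]$ this way forces $\Omega(\sqrt{n})$ stored hop values per vertex, i.e., $\Omega(n^{2.5})$ space, which breaks your budget.

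The idea you are missing is random sampling. The paper samples, at each geometric scale $i$, a hitting set $S_i\subseteq V$ of size $\tO(n/2^i)$ and runs Bellman--Ford from (and to) each $s\in S_i$ only to depth $2^{i+1}$. The key invariant is that $|S_i|\cdot 2^{i+1}=\tO(n)$ at every scale, so the total Bellman--Ford work is $\tO(mn)$ and the total storage $\tO(n^2)$. At query time, any simple shortest $\le h$-hop path is (w.h.p.) hit by some $s'\in S_i$ at the scale $i$ with $2^i\approx h$, and one recovers $d_{\le h}(u,v)$ by minimizing $d_{\le h'}(u,s')+d_{\le h-h'}(s',v)$ over $s'\in S_i$ and $h'\le 2^{i+1}$, again $\tO(n)$ work per scale. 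Sampling is what lets you trade off ``how many centers'' against ``how deep each center's table goes'' so that the product stays linear; your deterministic all-sources snapshot scheme has no such trade-off and is stuck at $\Theta(n)$ centers times $\Theta(n)$ depth.
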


As the preprocessing time $\widetilde{O}(mn)$ is truly subquartic,  the $\widetilde{O}(n)$ query time cannot be improved under the Min-Plus Convolution hypothesis.

For dense graphs we obtain a further improvement via a reduction to the Min-Plus Product problem for which Williams \cite{Williams18} obtained an $n^3/2^{\Theta(\sqrt{\log n})}$ time algorithm.

\begin{restatable}{theorem}{DOMPP}
    There exists an all-hops distance oracle for weighted graphs without negative cycles that has $n^3/2^{\Theta(\sqrt{\log n})}$ preprocessing time and
$\widetilde{O}(n)$ query time and uses $\widetilde{O}(n^2)$ space. 
\end{restatable}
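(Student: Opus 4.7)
The plan is to derive this as a dense-graph strengthening of Theorem~\ref{thm:do-mn}, replacing its $\tilde{O}(mn)$-time preprocessing bottleneck (which degrades to $\tilde{O}(n^3)$ on dense inputs) by a polylogarithmic number of Min-Plus Products, each solved via Williams' $n^3/2^{\Theta(\sqrt{\log n})}$-time algorithm~\cite{Williams18}.

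Concretely, I would first augment $G$ with a zero-weight self-loop at every vertex so that $d_{\le h}(u,v)$ equals the entry $(A^h)_{uv}$ of the min-plus power of the weighted adjacency matrix $A$ of the augmented graph. The stored data will be the repeated-squaring powers $B_i := A^{2^i}$ for $i = 0, 1, \ldots, \lceil \log_2 n \rceil$, computed via $B_0 := A$ and $B_{i+1} := B_i \star B_i$. Each squaring is one Min-Plus Product of $n \times n$ matrices and, by \cite{Williams18}, runs in $n^3/2^{\Theta(\sqrt{\log n})}$ time; the $O(\log n)$ outer iterations are absorbed into the exponent because $\log n = 2^{O(\sqrt{\log n})}$. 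Storing these $O(\log n)$ matrices uses $\tilde{O}(n^2)$ space, matching the space bound of Theorem~\ref{thm:do-mn}.

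For queries, I would invoke the $\tilde{O}(n)$-time query procedure of Theorem~\ref{thm:do-mn}, which operates on precisely the same type of precomputed data (min-plus distance tables together with lightweight auxiliary information). The main obstacle I expect is verifying that any auxiliary structure used internally by the Theorem~\ref{thm:do-mn} query---for instance, predecessor pointers or hop-landmark information that may arise as a by-product of the sparse-graph preprocessing in that theorem---can also be produced from the $B_i$'s within the preprocessing budget here. Since witnesses of a min-plus product can be recovered at essentially the same cost as the product itself via standard APSP witness-recovery techniques, and the information needed for the query is an $\tilde{O}(n^2)$-size summary of the $B_i$'s, I expect this step to go through without increasing the asymptotic preprocessing cost, yielding the stated bound $n^3/2^{\Theta(\sqrt{\log n})}$.
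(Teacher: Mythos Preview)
Your proposal has a genuine gap: the matrices $B_i = A^{2^i}$ do \emph{not} constitute ``precisely the same type of precomputed data'' that the query procedure of Theorem~\ref{thm:do-mn} uses. That query procedure stores, for each scale~$i$, a sampled set $S_i$ of size $\tO(n/2^i)$ together with $d_{\le h}(s,u)$ and $d_{\le h}(u,s)$ for \emph{every} $s\in S_i$, every $u\in V$, and \emph{every} $h\in[2^{i+1}]$, and then minimizes over all choices of $s$ and $h'$ in the range. Your $B_i$'s give only $d_{\le 2^i}(u,v)$, i.e., one hop value per scale rather than the full range. From the $B_i$'s alone, the natural way to answer a query $(u,v,h)$ is to write $h$ in binary and perform $O(\log n)$ vector--matrix min-plus products of the form $(\text{row}) \star B_{i_j}$; each such product costs $\Theta(n^2)$, so the query time is $\tO(n^2)$, not $\tO(n)$. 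The missing idea is not ``witness recovery'' or ``auxiliary pointers'' but the fine-grained per-hop distance tables themselves.

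The paper's proof does not simply square $A$. It adapts the Yuster--Zwick oracle: it takes nested random samples $S_0=V\supseteq S_1\supseteq\cdots$ with $|S_i|=\tO(n/(3/2)^i)$ and, in iteration~$i$, computes the full tables $A_{i,h}[s,v]=d_{\le h}(s,v)$ (and the symmetric $A'_{i,h}$) for all $s\in S_i$, $v\in V$, and $h\le(3/2)^i$. The crucial point is that the number of $(s,h)$ pairs at each scale is $|S_i|\cdot(3/2)^i=\tO(n)$, so all the tables at level~$i$ can be stacked into a single $\tO(n)\times\tO(n)$ matrix, and the update from level $i-1$ to level~$i$ is one min-plus product of $\tO(n)\times\tO(n)$ matrices, costing $n^3/2^{\Theta(\sqrt{\log n})}$. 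The stored tables then support the same $\tO(n)$-time query as in Theorem~\ref{thm:do-mn}. Your repeated-squaring gives the right preprocessing budget but throws away exactly the per-hop resolution that the query needs.
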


Finally, we consider the all-pairs version of all-hops shortest paths when the graph has bounded integer weights in $\{-M,\ldots,M\}$ and no negative cycles.
Our result for the single-source problem immediately implies that this all pairs version admits an $\widetilde{O}(Mn^{\omega+3/2})\leq O(Mn^{3.8714})$ time algorithm. We show that this running time can be improved substantially:

\begin{restatable}{theorem}{APAllHop}
    \label{thm:ap-upper-bound}
    There is an $\widetilde{O}(M^{0.5} n^{3.5})$ time algorithm
 that, given a
directed graph $G = (V, E)$ with edge weights in $\{-M, \ldots M\}$ and no negative cycles, computes $d_{\leq h}(u,
v)$ for all $1\leq h\leq n-1$ and $u, v\in V$.
\end{restatable}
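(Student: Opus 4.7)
The plan is to combine an iterative short-hop Bellman--Ford computation with the single-source algorithm of \cref{thm: single-source} run from a small random sample of sources, and then to perform the combine via batched rectangular min-plus matrix products. Pick a hop threshold $h^* = \sqrt{Mn}$ and sample $S \subseteq V$ uniformly at random with $|S| = \widetilde{O}(n/h^*) = \widetilde{O}(\sqrt{n/M})$ (so that $|S| \cdot h^* = \widetilde{\Theta}(n)$).

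First, for every $1 \le h \le h^*$ compute the all-pairs $\le h$-hop distance matrix $D_h$ iteratively by $D_h = D_{h-1} \odot A$, using the straightforward $O(n^3)$ min-plus matrix product at each step (here $A$ is the edge-weight matrix augmented with zero self-loops). The total cost of this ``short hop'' phase is $O(h^* \cdot n^3) = O(M^{1/2} n^{3.5})$. Second, for every $x \in S$ invoke \cref{thm: single-source} on $G$ and on the reverse graph to compute $d_{\le k}(x, v)$ and $d_{\le k}(u, x)$ for all $u, v$ and all $1 \le k \le n-1$; this costs $\widetilde{O}(M n^{\omega+1/2})$ per sampled vertex, giving a total of $\widetilde{O}(|S| \cdot M n^{\omega+1/2}) = \widetilde{O}(M^{1/2} n^{\omega+1}) \le \widetilde{O}(M^{1/2} n^{3.5})$ for $\omega \le 5/2$.

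For $h \le h^*$ the answer is already in hand. For $h > h^*$ a standard sampling argument shows that, with high probability, every shortest $\le h$-hop simple path from $u$ to $v$ with more than $h^*$ edges contains some $x \in S$ among its first $h^*$ vertices; hence
\[
d_{\le h}(u, v) \;=\; \min\!\Bigl(d_{\le h^*}(u, v),\ \min_{\substack{x \in S \\ 1 \le k \le h^*}} d_{\le k}(u, x) + d_{\le h-k}(x, v)\Bigr),
\]
where $d_{\le k}(u, x)$ comes from the short-hop phase (available for all $u$ and all $x$, in particular $x \in S$) and $d_{\le h-k}(x, v)$ from the sampled single-source phase.

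The hardest step is to evaluate the inner minimization for all $(u, v, h)$ within $\widetilde{O}(M^{1/2} n^{3.5})$: the naive per-tuple evaluation already takes $\Omega(|S| \cdot h^* \cdot n^3) = \Omega(n^4)$. The plan is to batch the work into rectangular min-plus matrix products. For each $k \in [1, h^*]$, form the $n \times |S|$ matrix $L^{(k)}(u, x) = d_{\le k}(u, x)$, and stack the $|S| \times n$ matrices $R^{(k')}(x, v) = d_{\le k'}(x, v)$ across $k' \in [1, n-1]$ into a single $|S| \times n^2$ matrix $\bar R$. A rectangular min-plus product $L^{(k)} \odot \bar R$ then yields $\min_{x \in S} d_{\le k}(u, x) + d_{\le k'}(x, v)$ for all $(u, v, k')$ simultaneously; the final $d_{\le h}(u, v)$ for $h > h^*$ is obtained by taking a min over $k \le h^*$ with $k' = h - k$. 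Bounding the total cost of these $h^*$ rectangular min-plus products by $\widetilde{O}(M^{1/2} n^{3.5})$---using the facts that $|S| \cdot h^* = \widetilde{\Theta}(n)$ and that all entries lie in $[-Mn, Mn]$ (enabling a Zwick-style reduction to fast rectangular matrix multiplication)---and tuning $h^*$ and $|S|$ so that all three phases match the target, is the main technical challenge.
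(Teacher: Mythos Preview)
There is a genuine gap in your Phase~3 (the combine step), and it cannot be closed by the rectangular min-plus products you sketch.

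With your parameters $h^*=\sqrt{Mn}$ and $|S|=\widetilde O(\sqrt{n/M})$, each product $L^{(k)}\star\bar R$ has dimensions $n\times|S|$ by $|S|\times n^2$, with entries of $\bar R$ ranging over $[-Mn,Mn]$ (it stores distances for \emph{all} hop counts up to $n-1$) while those of $L^{(k)}$ are bounded by $Mh^*$. Applying \cref{lem:mpp-finite-B} with $L^{(k)}$ on the bounded-entry side and optimising~$t$ gives $\widetilde O(n^{3.5})$ per product already when $\omega=2$ and $M=O(1)$; summed over the $h^*=\sqrt{n}$ values of~$k$ this is $\widetilde O(n^4)$. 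Stacking all $L^{(k)}$ into one $nh^*\times|S|$ block and doing a single large product yields the same $\widetilde O(n^4)$. The obstacle is structural: $\bar R$ necessarily has weight range $\Theta(Mn)$, and every small-weight min-plus technique pays a factor proportional to that range. Re-tuning $h^*$ does not help either --- even if you replace the matrix products by direct bounded-monotone min-plus convolutions from \cite{BDP24} on the $(u,x)$-vs-$(x,v)$ sequences, one of the two sequences still has length $n$ and range $Mn$, giving $\widetilde O(M^{1/2}n^{1.5})$ per $(u,v,x)$ triple and hence $\widetilde O(M^{1/2}n^{4.5}/h^*)$ in total; balancing against Phase~1's $\widetilde O(h^*n^3)$ lands at $\widetilde O(M^{1/4}n^{3.75})$, not $\widetilde O(M^{1/2}n^{3.5})$.

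The paper's argument is structurally different and does not invoke \cref{thm: single-source} at all. It runs a geometric recursion: at level~$k$, having already computed $d_{\le h}(u,v)$ for all pairs and all $h\le K:=(3/2)^{k-1}$, it extends to $h\le 3K/2$ by sampling $|S_k|=\widetilde O(n/K)$ vertices. With high probability each long optimal path is split by some $x\in S_k$ into two pieces \emph{each of hop-length at most $K$}, so for every $(u,v)$ and every $x\in S_k$ one only needs the min-plus convolution of the two length-$K$ sequences $\langle d_{\le h}(u,x)\rangle_{h\le K}$ and $\langle d_{\le h}(x,v)\rangle_{h\le K}$. Both are non-increasing with range $O(KM)$, so by \cite[Theorem~7]{BDP24} each convolution costs $\widetilde O(K\sqrt{KM})$; over $n^2|S_k|=\widetilde O(n^3/K)$ triples this is $\widetilde O(M^{1/2}K^{1/2}n^3)$ per level, maximised at $K=\Theta(n)$ to give $\widetilde O(M^{1/2}n^{3.5})$. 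The ingredient your single-threshold scheme is missing is that \emph{both} halves of the convolution must simultaneously be short (length $K$) and of small range ($O(KM)$); that is precisely what the level-by-level bootstrapping buys, and it is why Phase~2 of your plan --- which produces one sequence of length $n$ and range $Mn$ --- cannot feed a fast enough combine.
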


Our approach also gives an improved construction of distance oracles for all-hops shortest paths for graphs with bounded integer weights.

\begin{restatable}{theorem}{DObounded}
    There exists an all-hops distance oracle for graphs with integer edge weights in $\{-M, \ldots, M\}$ and no negative cycles that has $\tO(M^{2/3} n^{(6+\omega)/3})$ preprocessing time and
$\widetilde{O}(n)$ query time and uses $\widetilde{O}(n^2)$ space. 
\end{restatable}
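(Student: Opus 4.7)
The plan is to extend the landmark-based oracle of Theorem~\ref{thm:do-mn} by replacing its Bellman-Ford subroutines with the bounded-weight primitives built earlier in the paper. Sample a landmark set $S \subseteq V$ of size $|S| = \widetilde{\Theta}(n/h^*)$ for a parameter $h^*$ to be chosen, so that with high probability every simple path of at least $h^*$ hops meets $S$ among its first $h^*$ vertices (and, symmetrically, among its last $h^*$ vertices).

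The preprocessing maintains two families of tables. First, the \emph{short-hop table} stores $d_{\le h}(u, s)$ and $d_{\le h}(s, u)$ for every $u \in V$, $s \in S$, and $h \le h^*$; these are obtained by incrementally building bounded-weight $(\min,+)$ matrix powers of the weight matrix, noting that the entries of the $h$-th power lie in $[-Mh, Mh]$ and so doubling up to $D^*_{h^*}$ costs $\widetilde{O}(M h^* n^\omega)$ total, with the columns indexed by $S$ using $n \cdot |S| \cdot h^* = \widetilde{O}(n^2)$ storage. Second, the \emph{long-hop table} records $d_{\le h}(s, v)$ and $d_{\le h}(v, s)$ for every $s \in S$, $v \in V$, and $h \in [1, n]$, obtained from a rectangular variant of the all-pairs algorithm of Theorem~\ref{thm:ap-upper-bound}: adapting that algorithm to $|S|$ sources against $n$ targets should take $\widetilde{O}((M|S|)^{1/2}\, n^{5/2}) = \widetilde{O}((Mn/h^*)^{1/2}\, n^{5/2})$. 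To stay within $\widetilde{O}(n^2)$ space, only the $S \times S$ projection $d_{\le h}(s, s')$ is retained explicitly, together with compressed profiles for each $(s, v)$ pair that exploit the monotonicity of $h \mapsto d_{\le h}(s, v)$ under bounded integer weights.

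A query $d_{\le h}(u, v)$ for $h > 2h^*$ is answered via the two-landmark decomposition guaranteed by the hitting-set property: some optimal $(\le h)$-hop path factors as $u \leadsto s_1 \leadsto s_2 \leadsto v$ with $s_1, s_2 \in S$ and outer segments of at most $h^*$ hops. Hence
\[
d_{\le h}(u, v) \EQ \min_{\substack{s_1, s_2 \in S \\ h_1, h_3 \le h^*}} d_{\le h_1}(u, s_1) \;+\; d_{\le h - h_1 - h_3}(s_1, s_2) \;+\; d_{\le h_3}(s_2, v).
\]
Restructuring this three-way minimization as two per-query $|S|$-length vectors (an ``in-vector'' from the short-hop table for $u$ and an analogous ``out-vector'' for $v$, each formed in $\widetilde{O}(|S| h^*) = \widetilde{O}(n)$ time) combined against the precomputed $S \times S$ landmark distances by a single min-plus reduction yields an $\widetilde{O}(|S|^2 + |S| h^*) = \widetilde{O}(n)$ query. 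Queries with $h \le 2h^*$ are served directly from the short-hop table.

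Balancing the two preprocessing costs, $\widetilde{O}(Mh^* n^\omega) \EQ \widetilde{O}((Mn/h^*)^{1/2}\, n^{5/2})$, yields $h^* = \Theta(M^{-1/3}\, n^{(6-2\omega)/3})$ and total preprocessing $\widetilde{O}(M^{2/3}\, n^{(6+\omega)/3})$. The main obstacle will be twofold: (i) establishing the rectangular $\widetilde{O}((M|S|)^{1/2}\, n^{5/2})$ scaling of the all-pairs primitive when one source dimension is restricted to $|S|$, which requires re-inspecting the proof of Theorem~\ref{thm:ap-upper-bound} and checking that an $n$-factor can be traded for $|S|^{1/2}$; and (ii) arranging the long-hop storage so it fits in $\widetilde{O}(n^2)$ at the balance point (where $h^* < \sqrt{n}$ for all $M \ge 1$, so the naive $|S|^2 n$ table of $S \times S \times [n]$ distances already exceeds the budget), which likely requires either a second-level hierarchical sampling or a compact per-profile encoding tailored to the bounded-weight setting.
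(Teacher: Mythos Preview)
Your plan has genuine gaps, and you already flag them as ``obstacles (i) and (ii)''; unfortunately they are not technicalities but the heart of the matter.

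At your balance point $h^* = \Theta(M^{-1/3} n^{(6-2\omega)/3})$ you have $h^* < \sqrt{n}$ whenever $\omega > 9/4$, hence $|S| = \widetilde{\Theta}(n/h^*) > \sqrt{n}$. This kills both the space and the query bounds. Storing the $S\times S$ landmark distances for all $h\in[n]$ already costs $|S|^2 n > n^2$, and your query ``$\widetilde{O}(|S|^2 + |S|h^*) = \widetilde{O}(n)$'' is simply false since $|S|^2 > n$. Even ignoring space, the three-way minimization with a hop budget split across two landmark segments does not reduce to a single $\widetilde{O}(n)$-time min-plus step: the middle term depends on $h_1+h_3$, so the two ``$\widetilde{O}(n)$-length vectors'' you describe cannot be combined against a fixed $|S|\times|S|$ table without enumerating $\Theta(|S|^2)$ pairs. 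As for obstacle~(i), restricting Theorem~\ref{thm:ap-upper-bound} to $|S|$ sources does not yield a $|S|^{1/2}$ saving: the decomposition $u\to x\to v$ at hop level $K$ requires $d_{\le h}(x,v)$ for all $x$ in a hitting set of size $\widetilde{\Theta}(n/K)$ and all $v\in V$, which is not contained in your fixed $S$; you end up needing the full per-level data anyway.

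The fix, and what the paper does, is to abandon the single-level design and use a full hierarchy $S_0\supseteq S_1\supseteq\cdots$ with $|S_k|\cdot K_k = \widetilde{\Theta}(n)$ where $K_k=(3/2)^k$, storing at level $k$ the distances $d_{\le h}(s,u)$ and $d_{\le h}(u,s)$ for $s\in S_k$, $u\in V$, and only $h\le K_k$. This is $\widetilde{O}(n^2)$ per level and supports a \emph{single}-landmark query (not two) in $\widetilde{O}(|S_k|\cdot K_k)=\widetilde{O}(n)$ time per level, exactly as in Theorem~\ref{thm:do-mn}. For preprocessing, two primitives are balanced: direct matrix-power computation costing $\widetilde{O}(KMn^\omega)$ (your short-hop idea, used for small $K$), and a level-to-level extension via min-plus convolution of the monotone bounded sequences $\langle d_{\le h}(s,x)\rangle$ and $\langle d_{\le h}(x,u)\rangle$ over $x\in S_k$, costing $\widetilde{O}(M^{1/2}K^{-1/2}n^3)$ (used for large $K$). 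The crossover $K=M^{-1/3}n^{2-2\omega/3}$ gives the stated bound. The hierarchy is precisely what resolves both of your obstacles simultaneously.
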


\subsection{Related results}

Only a few theoretical papers were previously written on the all-hops shortest paths problem and its variants. As mentioned, the classical Bellman-Ford algorithm \cite{Bellman58,Ford56} solves the single-source version of the problem in $O(mn)$ time. Gu{\'{e}}rin and Orda \cite{GuOr02} point this out and also obtain a corresponding lower bound for a \emph{path-comparison-based} algorithms, adapting a lower bound of Karger, Koller and Phillips \cite{KKP93} for the All-Pairs Shortest Paths (APSP) problem. They also consider a bottleneck version of the all-hops problem. Similar results are obtained by Cheng and Ansari \cite{ChAn04}. As also mentioned, Kociumaka and Polak \cite{KoAd23} showed that the $O(mn)$ bound is conditionally optimal under the APSP hypothesis; for graphs that are not so dense, their lower bound holds under the Min-Plus Convolution hypothesis.

Shortest paths with some bounds on the number of hops were used by several researchers as a means towards obtaining efficient shortest paths algorithms in the sequential, parallel or distributed settings. (See, e.g., \cite{UlYa91}, \cite{Cohen97}, \cite{AndoniSZ20}.) 

We next mention some results for the standard shortest path problem that we either use or borrow ideas from. We begin with results for general edge weights. APSP can be easily solved in $O(n^3)$ time using the Floyd-Warshall algorithm \cite{Floyd62,Warshall62}. It is also known that the problem is equivalent to the \emph{min-plus} product (MPP) of two $n\times n$ matrices. Williams \cite{Williams18} obtained an $n^3/2^{\Theta(\sqrt{\log n})}$-time algorithm for the MPP problem, and hence also for the APSP problem. (For a deterministic version, see Chan and Williams \cite{ChWi21}.) It is conjectured that there is no truly sub-cubic algorithm for the APSP problem, i.e., an algorithm whose running time is $O(n^{3-\eps})$ for some $\eps>0$, even if the edge weights are of polynomial size. This is known as the APSP hypothesis and is the basing hypothesis of many conditional lower bounds (see~\cite{RodittyZ11,focs10,vsurvey}). 

When edge weights are small integers, i.e., in the range $\{0,1,\ldots,M\}$ or $\{-M,\ldots,M\}$, where $M\ll n$, truly subcubic algorithms for the APSP problem can be obtained using fast matrix multiplication. 
The APSP problem for undirected graphs with edges weights in the range $\{0,1,\ldots,M\}$ can be solved in $\Ot(Mn^\omega)$ time~\cite{DBLP:journals/jcss/AlonGM97,shoshanzwick}. 
The APSP problem for directed graphs with edge weights in the range $\{-M,\ldots,M\}$ can be solved in $\Ot(M^{1/(4-\omega)}n^{2+1/(4-\omega)})$ or slightly faster using rectangular matrix multiplication (see Zwick \cite{Zwick02}).  For instance, using current bounds on rectangular matrix multiplication \cite{WXXZ24}, the running time is $O(M^{0.751596} n^{2.527661})$. If $\omega=2$, then the running time of the algorithm is $O(\sqrt{M}n^{2.5})$. 
This running time was conjectured to be optimal (see \cite{lincoln2020monochromatic,DBLP:conf/icalp/ChanWX21}).

 Yuster and Zwick \cite{YuZw05} showed that a weighted directed graph can be preprocessed in $\Ot(Mn^\omega)$ time after which each distance query can be answered in $O(n)$ time. Zwick \cite{Zwick99} and Chan, Vassilevska Williams and Xu \cite{DBLP:conf/icalp/ChanWX21} obtained algorithms for finding shortest paths that use a minimal number of hops.

\section{Preliminaries}

For integer $n \in \mathbb{N}$, we use $[n]$ to denote $\{1, \ldots, n\}$. 

Let $G = (V, E, w)$ be an $n$-node $m$-edge directed graph with weight function $w$. We will assume $m \ge n - 1$ throughout the paper. The weighted adjacency matrix $W$ of $G$ is an $n \times n$ matrix where $W[u, v] = w(u, v)$ if $(u, v) \in E$, and $\infty$ otherwise. For $u, v \in V$, we use $d(u, v)$ to denote the length of the shortest path from $u$ to $v$. (Throughout the paper, length refers to the weighted length of a path, i.e., its length according to the weight function~$w$.) For $u, v \in V$ and an integer $h \ge 0$, we use $d_h(u, v)$ (resp. $d_{\leq h}(u, v)$) to denote the length of the shortest path from $u$ to $v$ that consists of exactly (resp. at most) $h$ hops (i.e., edges). For a path $p$, we use $|p|$ to denote its number of hops.

\begin{theorem}[\cite{Bellman58, Ford56}]
Given an $m$-edge directed weighted graph $G = (V, E, w)$, a vertex $s \in V$ and an integer $L \ge 1$, there exists an algorithm  $\BF{G}{s}{L}$ that computes $d_{h}(s, v)$ and $d_{\le h}(s, v)$ for all $v \in V$  and $1 \le h \le L$ in $O(mL)$ time.
\end{theorem}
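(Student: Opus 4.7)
The plan is to give the standard dynamic programming formulation of Bellman-Ford and argue correctness and running time in one pass. Define a table $D[h][v]$ for $0 \le h \le L$ and $v \in V$, intended to equal $d_h(s,v)$. Initialize $D[0][s]=0$ and $D[0][v]=\infty$ for every $v\neq s$, which correctly records that the only $0$-hop walk from $s$ ends at $s$ itself with weight $0$. For $h \ge 1$, use the recurrence
\[
D[h][v] \;=\; \min_{(u,v)\in E}\bigl(D[h-1][u] + w(u,v)\bigr),
\]
with $\min$ over the empty set treated as $\infty$. Correctness of $D[h][v]=d_h(s,v)$ is then an induction on $h$: any walk from $s$ to $v$ of exactly $h$ edges decomposes uniquely as an $(h{-}1)$-hop walk from $s$ to some in-neighbor $u$ of $v$ followed by the edge $(u,v)$, and the recurrence minimizes over exactly this decomposition.

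To also produce $d_{\le h}(s,v)$, maintain a second table by the rule $D^{\le}[0][v] = D[0][v]$ and
\[
D^{\le}[h][v] \;=\; \min\bigl(D^{\le}[h-1][v],\, D[h][v]\bigr),
\]
which by definition equals $\min_{0\le k\le h} d_k(s,v) = d_{\le h}(s,v)$. (The value $d_{\le h}(s,s)=0$ is captured because $D^{\le}[0][s]=0$ propagates forward.)

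For the running time, I would observe that computing the $h$-th layer of $D$ from the $(h{-}1)$-st can be done by iterating once over the edge set: initialize $D[h][v]=\infty$, then for each edge $(u,v)\in E$ perform the relaxation $D[h][v] \leftarrow \min(D[h][v],\, D[h-1][u]+w(u,v))$. This costs $O(m)$ per layer, and the simultaneous update of $D^{\le}[h][v]$ adds $O(n) = O(m)$ per layer since we assume $m\ge n-1$. Summed over the $L$ layers, the total time is $O(mL)$, as claimed.

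There is no real obstacle here — the only subtlety worth flagging is the handling of the ``at most $h$'' values in the presence of negative edge weights: one must take $D^{\le}[h][v]$ as the running minimum of $D[k][v]$ over $k\le h$ rather than over just the last two layers, because with negative-weight edges $d_k(s,v)$ is not monotone in $k$. The recurrence above does exactly that, so correctness is preserved without any extra asymptotic cost.
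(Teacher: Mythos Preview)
The paper does not supply its own proof of this theorem; it states it as a classical cited result from \cite{Bellman58,Ford56}. Your argument is the standard dynamic-programming derivation of Bellman--Ford and is correct, including the observation about taking a running minimum for $d_{\le h}$ in the presence of negative weights.
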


\begin{definition}[min-plus product]
Given two matrices $A$ and $B$ of dimensions $m\times n$ and $n\times
p$, respectively, we define the min-plus product $C = A\star B$ by $C[i, j] = \min_{k\in [n]}\{A[i, k] + B[k, j]\}$ for
all $i\in [m]$ and $j\in [p]$. If $A$ is a square matrix, we use $A^q$ for $A$ to
the $q$th power under min-plus product, where $q \in\mathbb{N}$.
\end{definition}

We need the following two results on min-plus product between special matrices. 

\begin{lemma}[\cite{DBLP:journals/jcss/AlonGM97}]
\label{lem:small-mpp}
Let $A$ be an $n^a \times n^b$ matrix and $B$ be an $n^b \times n^c$ matrix, where all entries of $A$ and $B$ are from $\{-M, \ldots, M\} \cup \{\infty\}$. Then the
min-plus product $A\star B$ can be computed in $\widetilde{O}(Mn^{\omega(a, b, c)})$
time. 
\end{lemma}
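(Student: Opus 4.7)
The plan is to use the classical Alon--Galil--Margalit reduction from bounded-entry min-plus product to ordinary integer matrix multiplication, paying only for the extra bit-length in the entries.

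First, I would shift so all finite entries are nonnegative: replace every finite $A[i,k]$ by $A[i,k] + M$ and every finite $B[k,j]$ by $B[k,j] + M$, putting them in $\{0, 1, \ldots, 2M\}$. The min-plus product of the shifted matrices differs from $A\star B$ by an additive $2M$, which is easy to undo at the end.

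Next I would encode each shifted nonnegative entry $a$ by the integer $x^{2M-a}$, for the base $x := n+1$, and encode $\infty$ as $0$; this defines integer matrices $A', B'$ with entries bounded by $x^{2M} = (n+1)^{2M}$, i.e., of bit-length $O(M\log n)$. Compute the standard integer product $C' = A' B'$. Each summand $A'[i,k]\cdot B'[k,j]$ is either $0$ or a power $x^{4M - (\tilde A[i,k] + \tilde B[k,j])}$, so
\[
C'[i,j] \;=\; \sum_{k\,:\,\text{both finite}} x^{4M - (\tilde A[i,k]+\tilde B[k,j])}.
\]
The largest summand corresponds to the $k$ achieving the min, and because there are at most $n^b \le n < x$ nonzero terms, all smaller powers combined cannot reach the next higher power of $x$. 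Hence $\lfloor \log_x C'[i,j]\rfloor = 4M - m^*_{ij}$ where $m^*_{ij}$ is the shifted min-plus entry, and $C'[i,j]=0$ certifies $(A\star B)[i,j] = \infty$; the unshifted $(A\star B)[i,j]$ follows in constant time per entry.

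Finally, I would bound the running time. Rectangular matrix multiplication of an $n^a\times n^b$ by an $n^b\times n^c$ matrix uses $O(n^{\omega(a,b,c)})$ ring operations. Here those operations are on integers of $O(M\log n)$ bits (with intermediate sums also of $O(M\log n)$ bits, since at most $n$ such powers are added); using fast integer arithmetic, each costs $\widetilde{O}(M)$ time, giving the claimed $\widetilde{O}(M n^{\omega(a,b,c)})$ bound. The only nontrivial point is that the base $x$ must be chosen large enough that different powers never collide after summation; the choice $x=n+1$ suffices and is the conceptual content of the lemma — the rest is bookkeeping of bit-complexity.
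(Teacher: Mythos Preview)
The paper does not give its own proof of this lemma; it simply cites Alon--Galil--Margalit. Your argument is exactly the standard reduction from that paper and is essentially correct.

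One small slip: the inner dimension is $n^b$, not $n$, so the number of nonzero terms summed in each entry of $C'$ is at most $n^b$, and you need $x > n^b$ rather than $x > n$ to guarantee that the lower powers cannot carry into the leading one. Your inequality ``$n^b \le n < x$'' only holds when $b \le 1$. Taking $x = n^b + 1$ (or $x = n^{\max\{a,b,c\}} + 1$) fixes this; the bit-length of the entries is still $O(bM\log n) = \widetilde{O}(M)$ since $b$ is a constant, and the rest of the argument goes through unchanged.
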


\begin{lemma}[{\cite[Lemma B.2]{DBLP:conf/icalp/ChanWX21}}]
\label{lem:mpp-finite-B}
Let $A$ an $n^a\times n^b$ integer matrix and let $B$ an $n^b\times n^c$ integer matrix, where all finite entries of $B$ are bounded by $L$ in absolute value and the entries of $A$ are bounded by $\textrm{poly}(n)$. Then for any parameter $0 \le t \le a + b$, the min-plus product $A\star B$ can be computed in time 
\[
\tO(L n^{\omega(a+b-t, b, c)} + n^{a+c+t}).
\]
\end{lemma}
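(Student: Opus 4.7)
The plan is to match the two summands of the target bound with two complementary algorithms: a direct brute-force sweep costing $O(n^{a+c+t})$, and a single invocation of \Cref{lem:small-mpp} on a carefully expanded auxiliary matrix costing $\tO(L\,n^{\omega(a+b-t,b,c)})$. As preprocessing, I would sort each row of $A$ and store, for every $i\in[n^a]$, the permutation $\pi_i$ with $A[i,\pi_i(1)]\le A[i,\pi_i(2)]\le\cdots\le A[i,\pi_i(n^b)]$; this takes $\tO(n^{a+b})$ time and is absorbed by the other terms.

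For the brute-force phase, I would compute, for every $(i,j)\in[n^a]\times[n^c]$, the quantity $M_1[i,j]:=\min_{\ell\in[n^t]}\bigl(A[i,\pi_i(\ell)]+B[\pi_i(\ell),j]\bigr)$ by directly examining the $n^t$ smallest-value entries of row $i$ of $A$. This costs $O(n^{a+c+t})$ time and returns the correct $C[i,j]$ whenever the optimal column index $k^*$ for $(i,j)$ has $\pi_i$-rank at most $n^t$.

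For the bounded-entry phase, handling entries whose optimal $k^*$ has $\pi_i$-rank greater than $n^t$, I would assemble an auxiliary matrix $\tilde A$ of dimensions roughly $n^{a+b-t}\times n^b$, with rows indexed by pairs $(i,r)$ where $r\in[n^{b-t}]$ enumerates $n^{b-t}$ ``rank-chunks'' partitioning the $\pi_i$-rank range $(n^t,n^b]$ into blocks of size $n^t$. Each row of $\tilde A$ encodes the restriction of $A[i,\cdot]$ to its $r$-th chunk after a per-chunk offset subtraction and an $O(L)$-window truncation that replaces by $\infty$ all entries further than $O(L)$ above the chunk's $A$-minimum. The correctness of the truncation rests on the observation that, because $|B[k,j]|\le L$ for finite entries, any $k^*$ attaining the min in the high-rank regime must satisfy $A[i,k^*]\le A[i,\pi_i(\ell^*_{\mathrm{fin}}(i,j))]+2L$, where $\ell^*_{\mathrm{fin}}(i,j)$ is the smallest $\pi_i$-rank in row $i$ with $B[\pi_i(\ell^*_{\mathrm{fin}}),j]$ finite. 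After normalization the entries of $\tilde A$ lie in $\{-O(L),\ldots,O(L)\}\cup\{\infty\}$, and applying \Cref{lem:small-mpp} to $\tilde A\star B$ yields the product in $\tO(L\,n^{\omega(a+b-t,b,c)})$ time; undoing the offsets and taking the entrywise minimum across chunks produces $M_2[i,j]$, so the final answer is $C[i,j]=\min(M_1[i,j],M_2[i,j])$.

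The main obstacle is to set up the per-chunk offset/truncation so that it is \emph{simultaneously valid for every column $j$}, since the natural reference rank $\ell^*_{\mathrm{fin}}(i,j)$ depends on $j$. The fix I would pursue is to use purely $A$-based (column-oblivious) per-chunk offsets and then argue, via the $2L$-window bound above, that for every $(i,j)$ the true optimizer survives the uniform truncation in at least one chunk—carefully handling the edge case in which no entry of the low-rank portion of row $i$ has a finite $B$-entry in column $j$, which requires treating the first ``live'' chunk specially. Once this is in place and the chunk accounting is verified to keep the total row count of $\tilde A$ at $\tO(n^{a+b-t})$, the rest of the argument is a direct application of \Cref{lem:small-mpp} together with routine bookkeeping to combine $M_1$ and $M_2$.
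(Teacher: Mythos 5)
There is a genuine gap in the truncation step, which you anticipated but did not resolve. Concretely, take a row of $A$ whose sorted values are $0,10,10,20$, a column $j$ of $B$ with entries (in that sorted order) $\infty,\infty,\infty,0$, and $L=1$, $n^t=2$. The true answer is $20$, achieved at rank $4$. Ranks $1,2$ have $B=\infty$, so your low-rank brute force $M_1$ returns $\infty$. In the chunk of ranks $3,4$ the row of $A$ spans $10$ to $20$; after subtracting the chunk minimum $10$ and truncating at $2L=2$, the rank-$4$ entry becomes $\infty$, so the bounded min-plus product returns $\infty$ as well. This is exactly the situation you flag (no entry of finite $B$-value in the low-rank portion of the row), and it shows that with a \emph{purely column-oblivious} offset-and-truncate there need not exist \emph{any} chunk in which the optimizer survives: when the minimum-rank column $\pi_i(\rho_0)$ with $B[\pi_i(\rho_0),j]<\infty$ already lies inside the same high-rank chunk as the optimizer, $A[i,k^\star]-\mu_{i,r^\star}$ can exceed $O(L)$ by an arbitrary amount, so no $O(L)$-width window anchored at an $A$-determined quantity can retain $k^\star$.

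Your proposed remedy, ``treat the first live chunk specially,'' is therefore not a bookkeeping edge case inside your framework but a missing algorithmic stage: the first live chunk depends on $j$, so one has to \emph{compute} it, and then handle it by brute force rather than by bounded MM. A fix that preserves the target bound is to form the $n^{a+b-t}\times n^b$ Boolean rank-chunk indicator matrix and take its Boolean product with the finiteness pattern of $B$, obtaining in $\tO(n^{\omega(a+b-t,b,c)})$ time the index $r_0(i,j)$ of the first chunk of row $i$ containing a column $k$ with $B[k,j]<\infty$; then brute-force over chunk $r_0(i,j)$ for each $(i,j)$ (an $O(n^{a+c+t})$ cost) in addition to chunk $1$. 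With this in hand the case analysis closes: if $\rho_0(i,j)$ lies in a strictly earlier chunk than the optimizer's chunk $r^\star$, then $\alpha_0\le\mu_{i,r^\star}$ and $A[i,k^\star]\le\alpha_0+2L\le\mu_{i,r^\star}+2L$, so your truncated min-plus product catches it; and if $\rho_0(i,j)$ and $\rho^\star$ share a chunk, that chunk is $r_0(i,j)$ and the extra brute force catches it. Without some step of this kind your argument, as written, is incorrect.
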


\begin{definition}[min-plus convolution]
Given two sequences $A$ and $B$ of length $n$, their min-plus convolution  $C = A\star B$ is a length $2n-1$ sequence where $C_i = \min_{1 \le k < i} \{A_k + B_{i - k}\}$ (out of boundary entries are treated as $\infty$).
\end{definition}

The min-plus convolution of two sequences of integers is a popular problem in fine-grained complexity (e.g. \cite{CMWW19,KunnemannPS17,JansenR23} ). Here, we also consider a version of the problem for two sequences of matrices, previously studied in \cite{GawrychowskiMW21}.

\begin{definition}[min-plus convolution of two matrix sequences]
    Given two sequences of $n \times n$ matrices $\cA =
    \langle A_{\ell_1}, \ldots, A_{r_1}\rangle$ and $\cB = \langle B_{\ell_2}, \ldots,
    B_{r_2}\rangle$, we define their min-plus-product-convolution $\cC = \cA \MinPlusConv \cB$ as a sequence of matrices $\langle C_{\ell_1+\ell_2}, \ldots, C_{r_1+r_2}\rangle$, where 
    \[
    C_{z}[i, j] = \min_{\substack{\ell_1 \le x \le r_1\\ \ell_2 \le y \le r_2 \\ x+y=z}} (A_x \star B_y)[i, j]
    \]
     for every $\ell_1 + \ell_2 \le z \le r_1 + r_2$ and $i, j \in [n]$.
\end{definition}

In \cite{GawrychowskiMW21}, it was shown that min-plus convolution between two length-$m$ sequences consisting of $n \times n$ matrices with arbitrarily large integer entries requires $n^3 m^{2-o(1)}$ time, under the Min-Plus Convolution hypothesis. If the matrices instead have small integer entries that are bounded in absolute values, we can solve the problem faster:

\begin{prop}
\label{prop:conv-min-plus}
    Given two sequences of $n \times n$ matrices $\cA =
    \langle A_{\ell_1}, \ldots, A_{r_1}\rangle$ and $\cB = \langle B_{\ell_2}, \ldots,
    B_{r_2}\rangle$ where the entries of the matrix elements of $\cA$ and
    $\cB$ are from $\{-M, \ldots, M\} \cup \{\infty\}$, we can compute $\cC = \cA
    \MinPlusConv \cB$ in $\widetilde{O}(M (r_1 - \ell_1 + 1 + r_2 - \ell_2 + 1) n^\omega)$ time.
\end{prop}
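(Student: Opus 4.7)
The plan is to reduce the matrix min-plus convolution to a single standard polynomial matrix multiplication via the Alon--Galil--Margalit (AGM) polynomial encoding of entries (which turns $(\min,+)$ into $(+,\times)$) together with a second polynomial variable encoding the convolution index, and then compute the polynomial matrix product by evaluation--interpolation.

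Let $m_1 := r_1-\ell_1+1$ and $m_2 := r_2-\ell_2+1$; after shifting we may assume $\ell_1=\ell_2=0$. Introduce formal variables $s,t$ and form the $n\times n$ matrices with entries in $\mathbb{Z}[s,t]$,
\[
\widetilde A(s,t)[i,k] \EQ \sum_{x=0}^{m_1-1} t^{x}\, s^{A_x[i,k]+M}, \qquad
\widetilde B(s,t)[k,j] \EQ \sum_{y=0}^{m_2-1} t^{y}\, s^{B_y[k,j]+M},
\]
where $\infty$-entries contribute $0$. Expanding the standard matrix product gives
\[
[t^z]\bigl(\widetilde A(s,t)\cdot \widetilde B(s,t)\bigr)[i,j] \EQ \sum_{k,\, x+y=z} s^{A_x[i,k]+B_y[k,j]+2M},
\]
so the smallest $s$-exponent with nonzero coefficient in this polynomial is exactly $C_z[i,j]+2M$ (and the polynomial is identically $0$ iff $C_z[i,j]=\infty$). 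Hence it suffices to recover $\widetilde A\cdot\widetilde B$ as a bivariate polynomial matrix and, for each $(z,i,j)$, scan its $t^z$-coefficient for the lowest nonzero $s$-power.

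Since the product has $s$-degree at most $4M$ and $t$-degree at most $m_1+m_2-2$, it is determined by its values on any grid of $O(M)\times O(m_1+m_2)$ evaluation pairs $(s_0,t_0)$. I would work over a prime field $\mathbb{F}_p$ of $\widetilde O(1)$-bit size supporting primitive roots of unity of the required orders; since every coefficient of $\widetilde A\cdot\widetilde B$ is a nonnegative integer at most $n\min(m_1,m_2)$, picking $p = \widetilde\Theta(n(m_1+m_2))$ avoids any modular collapse. A 2D forward NTT evaluates all entries of $\widetilde A$ and $\widetilde B$ on the whole grid in $\widetilde O(n^2 M(m_1+m_2))$ time in aggregate; at each of the $O(M(m_1+m_2))$ grid points I then multiply the two resulting $n\times n$ matrices over $\mathbb{F}_p$ in $\widetilde O(n^\omega)$ time, for a total of $\widetilde O(M(m_1+m_2)\, n^\omega)$. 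A 2D inverse NTT recovers all output coefficients in $\widetilde O(n^2 M(m_1+m_2))$ time, and a final linear scan extracts each $C_z[i,j]$ within the same budget; the matrix multiplications dominate.

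The main item to verify is the arithmetic model: existence of a prime $p$ of $\widetilde O(1)$-bit length with primitive roots of unity of the required orders (standard), and the $\widetilde O(n^\omega)$ bound for multiplying $n\times n$ matrices over $\mathbb{F}_p$ with $O(\log n)$-bit entries (also standard, and analogous to what is used in \cref{lem:small-mpp}). As an alternative that avoids NTTs altogether, one may keep $s$ formal and evaluate only $t$ at $m_1+m_2-1$ points; at each $t_0$, the matrices $\widetilde A(s,t_0)$ and $\widetilde B(s,t_0)$ have polynomial-in-$s$ entries of degree $\le 2M$, and their product is computed by $s$-evaluation-interpolation at $4M+1$ points, each point costing one $\widetilde O(n^\omega)$-time matrix multiplication, for $\widetilde O(M n^\omega)$ per $t_0$ and the same overall total $\widetilde O(M(m_1+m_2)\, n^\omega)$.
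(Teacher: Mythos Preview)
Your proposal is correct and takes essentially the same approach as the paper: encode each matrix sequence as a single $n\times n$ matrix with bivariate polynomial entries (one variable for the convolution index, one for the AGM encoding of values), multiply, and read off the minimum $s$-exponent of each $t^z$-coefficient. The paper simply cites a black-box for $\tO(mMn^\omega)$-time bivariate polynomial matrix multiplication, whereas you spell out the evaluation--interpolation implementation (2D NTT over a suitable prime field, or the $t$-only variant), but the underlying argument is the same.
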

\begin{proof}
By appropriate padding and shifting, we can assume WLOG that $\cA = \langle A_{0}, \ldots, A_{m-1}\rangle$ and $\cB = \langle B_{0}, \ldots, B_{m-1}\rangle$, and all entries of the matrices are from $\{0, \ldots, 5M\}$. 

Define an $n \times n$  matrix $\mathbb{A}$ whose entries are polynomials in two variables $x, y$, where
\[
\mathbb{A}[i, j] = \sum_{0 \le \ell < m} x^\ell y^{A_\ell[i, j]},
\]
and 
\[
\mathbb{B}[i, j] = \sum_{0 \le \ell < m} x^\ell y^{B_\ell[i, j]}.
\]
Then the $(i, j)$-th entry of the product between $\mathbb{A}$ and $\mathbb{B}$ is 
\begin{align*}
    \sum_{1 \le k \le n}\left(\sum_{0 \le \ell < m} x^\ell y^{A_\ell[i, k]}\right)\left(\sum_{0 \le \ell < m} x^\ell y^{B_\ell[k, j]}\right) = \sum_{0 \le \ell < 2m-1}  \sum_{\substack{0 \le \ell_1 < m \\ 0 \le \ell_2 < m \\ \ell_1 + \ell_2 = \ell}} \sum_{1 \le k \le n} x^\ell y^{A_{\ell_1}[i, k] + B_{\ell_2}[k, j]}. 
\end{align*}
Hence, for every $i, j \in [n]$, $0 \le \ell < 2m-1$, $C_\ell[i, j]$ is exactly the degree of the minimum-degree nonzero monomial of coefficient of $x^\ell$ (the coefficient is a polynomial in $y$) in $(\mathbb{A} \mathbb{B})[i, j]$. Therefore, it suffices to compute the product between $\mathbb{A}$ and $\mathbb{B}$. As the maximum degree on $x$ of all entries is $O(m)$, and the maximum degree on $y$ of all entries is $O(n)$, $\mathbb{A} \mathbb{B}$ can be computed in $\tO(mMn^\omega)$ time (see e.g., \cite{DBLP:conf/stoc/ChiDX022}). 
\end{proof}

The relevance of min-plus convolutions of sequences of matrices to the all-hops shortest path problem is demonstrated by the following lemma.

\begin{lemma}
    Let $G=(V,E,w)$ be a weighted directed graph with $n=|V|$. Let $D_k$ be an $n\times n$ matrix such that $D_k[u,v]=d_{\le k}(u,v)$, for every $u,v\in V$. Then,
    \[\left\langle D_0,D_1,\ldots,D_{2k} \right\rangle \EQ \left\langle D_0,D_1,\ldots,D_k \right\rangle \MinPlusConv \left\langle D_0,D_1,\ldots,D_k \right\rangle \;. \]
\end{lemma}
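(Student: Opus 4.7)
The plan is to prove the claimed matrix equality entrywise: for each $u, v \in V$ and each $0 \le z \le 2k$, I will show that the $(u,v)$-entry of the $z$-th matrix of the convolution,
\[
C_z[u,v] \EQ \min_{\substack{x+y=z \\ 0\le x,y\le k}} \min_{w \in V} \bigl(D_x[u,w] + D_y[w,v]\bigr),
\]
equals $D_z[u,v] = d_{\le z}(u,v)$, by establishing the two inequalities separately.

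For the $\ge$ direction, I would observe that for any admissible $x, y, w$, the concatenation of a shortest $\le x$-hop path from $u$ to $w$ with a shortest $\le y$-hop path from $w$ to $v$ is a walk from $u$ to $v$ of weight $D_x[u,w] + D_y[w,v]$ using at most $x+y=z$ edges. Since $G$ contains no negative cycles, iteratively excising any closed subwalks yields a simple $u$-$v$ path of no greater weight and no greater hop-count, whose weight is therefore at least $d_{\le z}(u,v) = D_z[u,v]$. This direction is routine.

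For the $\le$ direction, I would start with an optimal $\le z$-hop path $P$ from $u$ to $v$ of length $\ell \le z$ and exhibit a single valid term of the minimum whose value is at most $D_z[u,v]$. Since $\ell \le z \le 2k$, there is a valid split index $p \in [\max(0,\ell-k),\,\min(\ell,k)]$; let $w$ be the vertex at position $p$ along $P$, so that the prefix has exactly $p$ hops and the suffix exactly $\ell-p$ hops. To pad the split so that the two indices sum to exactly $z$, I would take $x=\max(p,\,z-k)$ and $y=z-x$; a short check gives $x,y\in[0,k]$, $x+y=z$, $x\ge p$, and $y\ge \ell-p$. Monotonicity of $d_{\le h}$ in $h$ then bounds $D_x[u,w]$ by the weight of the $p$-prefix of $P$ and $D_y[w,v]$ by the weight of its $(\ell-p)$-suffix, and summing gives $D_x[u,w]+D_y[w,v]\le D_z[u,v]$, as required.

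The main obstacle is the accounting in the $\le$ direction: because $P$ may use strictly fewer than $z$ edges, one cannot simply split $P$ itself into two pieces whose hop counts sum to $z$. The key idea is that inflating $x$ and $y$ beyond the actual prefix and suffix hop counts of $P$ only decreases $D_x$ and $D_y$ (by monotonicity), and the hypothesis $z\le 2k$ is exactly what guarantees that such a padded split with both indices in $[0,k]$ always exists.
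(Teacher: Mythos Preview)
Your proof is correct. The paper states this lemma without proof, treating it as a routine observation, so there is no ``paper's approach'' to compare against; your entrywise verification via the two inequalities is exactly the natural argument.

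One minor remark: your invocation of the no-negative-cycles hypothesis in the $\ge$ direction is not strictly stated in the lemma, but it is a standing assumption throughout the paper. In fact, if one reads $d_{\le h}$ as the minimum over \emph{walks} with at most $h$ hops (which is what Bellman--Ford computes, and which coincides with the simple-path definition under the no-negative-cycles assumption), then the $\ge$ direction is immediate without any cycle excision: the concatenation is already a $\le z$-hop walk, hence its weight is at least $d_{\le z}(u,v)$. Either way, your argument goes through.
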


We  need the following standard result.

\begin{lemma}\label{lemma:hitting_set}
(Hitting set lemma.)
Let $S = \{S_1, \ldots, S_N\}$ be a family of $N$ subsets of $[L]$ such that 
$|S_i| > k$ for every $i\in[N]$. Then a uniformly random subset $H\sub
[L]$ of size at least $C(L/k)\ln N$, where $C\geq 1$, hits every set
in $S$ with high probability: for all $1\leq i\leq N$, $H\cap
S_i\neq\varnothing$ with probability at least $1-1/N^{C-1}$.
  
\end{lemma}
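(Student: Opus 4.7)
The plan is to prove the statement by a direct sampling calculation followed by a union bound over the $N$ sets. First, I would fix an arbitrary $i \in [N]$ and analyze $\Pr[H \cap S_i = \varnothing]$. By monotonicity it suffices to treat the case $|H| = h := \lceil C(L/k)\ln N \rceil$, since enlarging $H$ can only increase its hitting probability (one can couple samples of different sizes so that the larger contains the smaller). A uniformly random $h$-subset of $[L]$ misses $S_i$ exactly when $H \sub [L]\sm S_i$, which happens with probability $\binom{L-|S_i|}{h}/\binom{L}{h}$.

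Next, I would use $|S_i| > k$ to upper bound this ratio by $\binom{L-k}{h}/\binom{L}{h} = \prod_{j=0}^{h-1} (L-k-j)/(L-j)$. Each factor equals $1 - k/(L-j)$, which is at most $1 - k/L \leq e^{-k/L}$. Multiplying across $j = 0, \ldots, h-1$ and substituting the chosen value of $h$ yields
\[
\Pr[H \cap S_i = \varnothing] \LE (1 - k/L)^h \LE e^{-kh/L} \LE e^{-C \ln N} \EQ N^{-C}.
\]

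Finally, a union bound over the $N$ choices of $i$ gives
\[
\Pr[\exists i : H \cap S_i = \varnothing] \LE N \cdot N^{-C} \EQ 1/N^{C-1},
\]
so with probability at least $1 - 1/N^{C-1}$ the sample $H$ hits every $S_i$ simultaneously, as claimed. I anticipate no real obstacle; the only mild subtleties are justifying the reduction to a fixed-size sample via the monotonicity coupling, and handling the degenerate case $h > L$ (in which the conclusion is trivially true by taking $H = [L]$ deterministically). Both are entirely routine.
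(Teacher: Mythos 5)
Your proof is correct, and it is the standard argument. The paper itself states this lemma as a known ``standard result'' and provides no proof, so there is nothing in the text to compare against; your calculation (bound the miss probability of an $h$-subset by $\binom{L-k}{h}/\binom{L}{h} \le (1-k/L)^h \le e^{-kh/L} \le N^{-C}$, then union-bound over the $N$ sets to get $1 - 1/N^{C-1}$) is exactly what one would expect to be the intended justification, and the two routine caveats you flag --- monotonicity via coupling to reduce to a fixed sample size, and the degenerate case $h > L$ --- are handled appropriately.
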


\section{Single-Pair All-Hops Distances}\label{sec:st-version}

In this section, we prove \cref{thm:s-t-alg}, which we recall below:
\stAllHop*

For any integer $k\ge 1$, we describe an 
$\widetilde{O}(k M n^{\omega+2/k})$-time algorithm  that, given an $n$-vertex directed graph $G = (V,E,w)$, where $w:E\to\{-M,\ldots,M\}$, with no negative cycles, and vertices $s, t\in V$, computes $d_{\leq h}(s, t)$ for all $1 \le h < n$. 

The algorithm performs $k$ iterations. We let $r$ be the index of the current iteration, starting from~$0$. In each iteration we choose a random sample $S_r$ of the vertices where $S_0=V$ and where $S_0\supseteq S_1 \supseteq S_2\supseteq \cdots \supseteq S_{k}$. We also make sure that $s,t\in S_r$, for every $0\le r\le k$. The size of $S_r$ is $C n^{1-r/k}\log n$, for a sufficiently large constant $C>0$ (except $S_0$, which has size $n$). In the $r$-th iteration we compute $d_{\le h}(u,v)$ for every $u,v\in S_r$ and $h\le n^{r/k}$. In the $0$-th iteration, with $S_0=V$, we need to compute $d_{\le 1}(u,v)$, for every $u,v\in V$, which is just the weighted adjacency matrix of the graph (except the diagonal terms, which are all $0$). After the $k$-th iteration we have $d_{\le h}(s,t)$ for every $1\le h<n$, since $s,t\in S_k$, as required.

Next, we consider iteration $r > 0$. For simplicity, let $H := n^{(r-1)/k}$. 

For every $0 \le i \le L$, where $L = \lfloor \log(n^{r/k}) \rfloor$,  we can define a sequence of matrices 
\[\cD_i := \langle d_{\leq
  j}(S_{r-1}, S_{r-1})\rangle_{j=2^i-H/2}^{2^i+H/2}, 
\] 
where $d_{\le j}(S_{r-1}, S_{r-1})$ denotes an $S_{r-1} \times S_{r-1}$ matrix whose $(u, v)$-th entry is $d_{\le j}(u, v)$ (if $j < 0$, then $d_{\le j}(u, v) = \infty$ for every $u, v \in V$). We can easily compute $\cD_0$ because we have computed $d_{\le h}(s_1, s_2)$ for $s_1, s_2 \in S_{r-1}$ and $h \in [H]$ in the previous iteration. For $1 \le i \le L$, we will show that in order to compute $\cD_{i}$, it suffices to compute the min-plus convolution of
$\cD_{i-1}$ with itself.

For $s_1, s_2 \in S_{r-1}$ and $2^i-H/2\leq h\leq 2^i+H/2$, let $p$ be a simple path achieving $d_{\le h}(s_1, s_2)$ (we can assume $p$ is simple because $G$ has no negative cycles). By Lemma \ref{lemma:hitting_set},
w.h.p., there is a vertex $x\in S_{r-1}$ that hits the path $p$ and
splits it into two subpaths $p_1 = (s_1, \ldots, x)$ and $p_2 = (x,
\ldots, s_2)$ such that $|p|/2 - H/4 \le |p_1|, |p_2| \le |p|/2 + H / 4$ (a corner case is when $|p| < H/2$, in which case we can simply take $x = s_1$ without applying \cref{lemma:hitting_set}).

By taking $j = \max\{h / 2 - H/4, |p_1|\}$, we get the following properties\footnote{Here, we have to do some careful calculation. For the algorithm for DAGs that computes $d_{h}(s_1, s_t)$, we have the guarantee that $h/2 - H/4 \le |p_1|$, so the value of $j$ will simply become $|p_i|$, and much of the calculation can be simplified. }: 
\begin{itemize}
    \item $|p_1| \le j$. This is clearly true. 
    \item $|p_2| \le h - j$. This is because $h - j \ge |p| - |p_1| \ge |p_2|$.
    \item $j \in [2^{i-1} - H/2, 2^{i-1} + H/2]$. First, we have $j \ge h / 2 - H/4 \ge (2^i - H/2) / 2 - H/4 = 2^{i-1} - H/2$. Also, $j = \max\{h / 2 - H/4, |p_1|\} \le  \max\{h / 2 - H/4, |p|/2 + H/4\} \le h / 2 + H/4 \le (2^i + H/2) / 2 + H/4 = 2^{i-1} + H/2$. 
    \item $h - j \in [2^{i-1} - H/2, 2^{i-1} + H/2]$. First, $h - j \le h - (h/2 - H/4) \le 2^{i-1} + H/2$. Also, $h - j = h - \max\{h/2 - H/4, |p_1|\} \ge h - \max\{h/2 - H/4, |p|/2 + H/4\}\ge h - (h / 2 + H/4) \ge 2^{i-1} - H/2$.
\end{itemize}
This implies 
\[
d_{\le h}(s_1, s_2) = \min_{\substack{2^{i-1}-H/2 \le j \le 2^{i-1}+H/2\\ 2^{i-1}-H/2 \le h-j \le 2^{i-1}+H/2}} \min_{x \in S_{r-1}} \left\{d_{\le j}(s_1, x) + d_{\le h - j}(x, s_2) \right\},
\]
which is exactly $(\cD_{i-1} \MinPlusConv \cD_{i-1})_h[s_1, s_2]$.

\begin{claim}\label{cl:s-t-alg-1}

Given $\cD_{i-1}$, we can compute $\cD_{i}$ in
$\widetilde{O}(M n^{\omega+1/k})$ time, where $1\leq i\leq
L$.
\end{claim}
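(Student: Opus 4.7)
The plan is to reduce directly to Proposition~\ref{prop:conv-min-plus}. By the identity derived just above the claim statement, for every $h$ in the output range of $\cD_i$ and every $u,v \in S_{r-1}$, the value $d_{\le h}(u,v)$ coincides (with high probability over $S_{r-1}$) with the $(u,v)$-entry of the $h$-th matrix of $\cD_{i-1} \MinPlusConv \cD_{i-1}$. Hence it suffices to compute this convolution and then restrict to the required output range.

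The first step is to bound the magnitude of the entries of $\cD_{i-1}$. Each finite entry is $d_{\le j}(u,v)$ for some $j \in [2^{i-1} - H/2,\, 2^{i-1} + H/2]$; since $i \le L = \lfloor \log(n^{r/k}) \rfloor$ and $H = n^{(r-1)/k}$, we have $j \le O(n^{r/k})$, and since the edge weights lie in $\{-M, \ldots, M\}$, each finite entry has absolute value at most $M' := O(M n^{r/k})$.

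Next, I would apply Proposition~\ref{prop:conv-min-plus} with weight bound $M'$ to two copies of $\cD_{i-1}$, viewed as sequences of matrices of dimension $|S_{r-1}| \times |S_{r-1}|$ and length $H + 1$. Using $|S_{r-1}| = \tilde{O}(n^{1 - (r-1)/k})$ (a bound that also accommodates $|S_0| = n$ when $r = 1$), the resulting running time is
\[
\tilde{O}\bigl(M' \cdot H \cdot |S_{r-1}|^\omega\bigr) = \tilde{O}\bigl(M \cdot n^{\,r/k + (r-1)/k + \omega(1 - (r-1)/k)}\bigr).
\]
Setting $a := (r-1)/k$, the exponent rewrites as $\omega + 1/k + a(2 - \omega)$. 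Since $\omega \ge 2$ and $a \ge 0$, the last summand is nonpositive, yielding the desired $\tilde{O}(Mn^{\omega + 1/k})$ bound.

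The main subtlety is that the entries of $\cD_{i-1}$ are considerably larger than $M$ in general (their absolute values can reach $\Theta(Mn^{r/k})$), so a direct reading of Proposition~\ref{prop:conv-min-plus} would appear to lose an extra factor of $n^{r/k}$. The saving comes from the fact that the matrices in $\cD_{i-1}$ live over the progressively smaller hitting set $S_{r-1}$, whose size shrinks as $r$ grows; these two effects cancel exactly because $\omega \ge 2$.
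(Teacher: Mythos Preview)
Your proposal is correct and essentially identical to the paper's own proof: both reduce to Proposition~\ref{prop:conv-min-plus}, bound the finite entries of $\cD_{i-1}$ by $O(Mn^{r/k})$, plug in the sequence length $O(n^{(r-1)/k})$ and matrix dimension $\tO(n^{1-(r-1)/k})$, and simplify the resulting exponent $\omega + 1/k + (r-1)/k\,(2-\omega)$ to at most $\omega + 1/k$ using $\omega \ge 2$. Your closing remark about the cancellation between the growing entry magnitudes and the shrinking hitting-set size is a nice bit of exposition that the paper omits.
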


\begin{proof}

By previous discussion, it suffices to compute $\cD_{i-1} \MinPlusConv \cD_{i-1}$. By  \cref{prop:conv-min-plus},
since $\cD_{i-1}$ is a length-$O(n^{(r-1)/k})$ sequence of matrices of dimension
$|S_{r-1}|\times |S_{r-1}| = \tO(n^{1-(r-1)/k}) \times \tO(n^{1-(r-1)/k})$ whose finite entries are at most $O(M n^{r/k})$, we can compute
$\cD_{i-1} \MinPlusConv \cD_{i-1}$ in 
\[\widetilde{O}(M n^{r/k}n^{(r-1)/k}(n^{1-(r-1)/k})^\omega) = \tO(M n^{\omega + 1/ k - (\omega - 2) \cdot (r-1)/k}) =  \tO(M n^{\omega + 1/ k})
\]
time. 
\end{proof}

Next, we use $\cD_0, \ldots, \cD_{L}$ to compute $d_{\leq h}(S_r,
S_r)$ for all $h\in [n^{r/k}]$. We do this in $L+1$ iterations,
starting from iteration $0$. In iteration $i$, we use $\cD_i$ and
$d_{\leq h}(S_r, S_{r-1})$ for $1\leq h\leq 2^{i}$ (which
are already computed prior to the start of the iteration)
to compute $d_{\leq h}(S_r, S_{r-1})$ for all $1\leq h\leq
2^{i + 1}$. If $2^{i+1} \le n^{(r-1)/k}$, we can simply use distances computed in previous iterations, so we assume $2^{i+1} > n^{(r-1)/k}$. For $s_1 \in S_r$, $s_2 \in S_{r-1}$, and $1 \leq h\leq 2^{i+1}$, let $p$ be a simple shortest path achieving $d_{\le h}(s_1, s_2)$. If $|p| \le 2^i $, then we can use $d_{\le 2^i} (s_1, s_2)$ (which was known before the iteration) to compute $d_{\le h}(s_1, s_2)$, so we can assume $|p| > 2^i$ without loss of generality. By Lemma~\ref{lemma:hitting_set},
w.h.p., there is a vertex $x\in S_{r-1}$ that hits the path $p$ and
splits it into two subpaths $p_1 = (s_1, \ldots, x)$ and $p_2 = (x,
\ldots, s_2)$ such that $2^{i}\leq
|p_2|\leq 2^{i}+n^{(r-1)/k}/2$ holds, which further implies $|p_1| \le h - 2^i \le 2^i$.  

Given this, we can compute $d_{\leq
  h}(S_r, S_{r-1})$ for $1 \leq h\leq 2^{i + 1}$ by
computing: $$\langle d_{\leq j}(S_r,
S_{r-1})\rangle_{j=1}^{2^{i}}\MinPlusConv \langle d_{\leq j}(S_{r-1},
S_{r-1})\rangle_{j=2^{i}}^{2^{i}+n^{(r-1)/k}/2}.$$

At the end of $L+1$ rounds of iterations of this inner loop, we have
computed the sequence $\langle d_{\leq j}(S_r,
S_r)\rangle_{j=1}^{n^{r/k}}$.

\begin{claim}\label{cl:s-t-alg-2}

For any $i \in [1, L]$, given $\cD_i$ and $\langle d_{\leq j}(S_r,
S_{r-1}) \rangle_{j=1}^{2^{i}}$, we can compute $\langle
d_{\leq j}(S_r, S_{r-1}) \rangle_{j=1}^{2^{i+1}}$ in
$\widetilde{O}(M n^{\omega+2/k})$ time.
  
\end{claim}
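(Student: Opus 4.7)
The plan is to reduce the computation of the extended distances to a single min-plus convolution of matrix sequences, following the template of Claim~\ref{cl:s-t-alg-1}. Define
\[
\cA \;:=\; \langle d_{\le j}(S_r, S_{r-1})\rangle_{j=1}^{2^{i}},\qquad \cB \;:=\; \cD_i,
\]
both of which are available by hypothesis, and compute $\cC := \cA \MinPlusConv \cB$. Only the entries $\cC_h$ with $h \in (2^i, 2^{i+1}]$ need to be extracted, since the values $d_{\le j}(S_r, S_{r-1})$ for $j \le 2^i$ are already given as input.

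To argue correctness, fix $s_1 \in S_r$, $s_2 \in S_{r-1}$, $h \in (2^i, 2^{i+1}]$, and let $p$ be a simple path achieving $d_{\le h}(s_1, s_2)$. If $|p| \le 2^i$, then $d_{\le h}(s_1, s_2) = d_{\le 2^i}(s_1, s_2)$, which is already known. Otherwise $|p| > 2^i$, and by the hitting-set argument given in the paragraph preceding the claim, w.h.p.\ there is an $x \in S_{r-1}$ that splits $p$ into $p_1 = (s_1, \dots, x)$ and $p_2 = (x, \dots, s_2)$ with $|p_2| \in [2^i, 2^i + H/2]$ (where $H = n^{(r-1)/k}$), so $|p_1| \in [1, 2^i]$. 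Taking $j_1 = |p_1|$ and $j_2 = |p_2|$ places both indices in the ranges of $\cA$ and $\cB$, hence $d_{\le j_1}(s_1, x) + d_{\le j_2}(x, s_2) \le w(p) = d_{\le h}(s_1, s_2)$. The reverse inequality is immediate from path concatenation, so w.h.p.\
\[
d_{\le h}(s_1, s_2) \;=\; \min\bigl\{d_{\le 2^i}(s_1, s_2),\ \cC_h[s_1, s_2]\bigr\}.
\]

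For the running time, I would pad each $|S_r| \times |S_{r-1}|$ matrix in $\cA$ into an $|S_{r-1}| \times |S_{r-1}|$ matrix with rows of~$\infty$; this does not affect the relevant convolution entries. Both sequences then consist of $N \times N$ matrices with $N = \tO(n^{1 - (r-1)/k})$, have total length $O(2^{i+1} + H) = O(n^{r/k})$, and have finite entries bounded in absolute value by $O(Mn^{r/k})$. Applying Proposition~\ref{prop:conv-min-plus} with $Mn^{r/k}$ in the role of~$M$ yields
\[
\tO\bigl(Mn^{r/k} \cdot n^{r/k} \cdot n^{\omega(1 - (r-1)/k)}\bigr) \EQ \tO\bigl(Mn^{\omega + (2r - (r-1)\omega)/k}\bigr) \LE \tO\bigl(Mn^{\omega + 2/k}\bigr),
\]
where the last inequality uses $\omega \ge 2$ and $r \ge 1$, so that $(r-1)\omega \ge 2(r-1)$ and hence $2r - (r-1)\omega \le 2$.

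The main step to handle carefully is the bookkeeping of hop ranges: one must verify that every witness split $(j_1, j_2)$ produced by the hitting-set argument lies inside the prescribed index sets of $\cA$ and $\cB = \cD_i$, and the degenerate case $|p| \le 2^i$ must be absorbed into the $d_{\le 2^i}$ term. The remaining work is a mechanical invocation of Proposition~\ref{prop:conv-min-plus} combined with exponent arithmetic in which the assumption $\omega \ge 2$ does most of the heavy lifting.
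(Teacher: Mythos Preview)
Your approach is exactly the paper's: compute $\langle d_{\le j}(S_r,S_{r-1})\rangle_{j=1}^{2^i}\MinPlusConv\cD_i$ and invoke Proposition~\ref{prop:conv-min-plus} with matrix dimension $\tO(n^{1-(r-1)/k})$, sequence length $O(n^{r/k})$, and entry bound $O(Mn^{r/k})$; the exponent arithmetic is identical.

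One small bookkeeping slip: the witness $(j_1,j_2)=(|p_1|,|p_2|)$ that you propose has $j_1+j_2=|p|$, not $h$, so it certifies $\cC_{|p|}[s_1,s_2]\le w(p)$ rather than $\cC_h[s_1,s_2]\le w(p)$. The fix is to take $j_2=|p_2|\in[2^i,2^i+H/2]$ and $j_1=h-|p_2|$; then $j_1+j_2=h$, $j_1\ge |p|-|p_2|=|p_1|$ so $d_{\le j_1}(s_1,x)\le w(p_1)$ by monotonicity of $d_{\le j}$ in $j$, and $j_1\le 2^{i+1}-2^i=2^i$. This is the choice implicit in the paper's ``previous discussion'' paragraph.
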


\begin{proof}

By previous discussion, it suffices to compute $\langle d_{\leq j}(S_r,
S_{r-1}) \rangle_{j=1}^{2^{i}} \MinPlusConv \cD_i$. By
\cref{prop:conv-min-plus}, since $\langle d_{\leq j}(S_r,
S_{r-1}) \rangle_{j=1}^{2^{i}}$ is a length-$O(n^{r/k})$ sequence of
matrices of dimension $|S_r|\times |S_{r-1}| \le \tO(n^{1-(r-1)/k}) \times \tO(n^{1-(r-1)/k})$ whose finite entries are at most $O(Mn^{r/k})$ and $\cD_i$ is a length-$O(n^{r/k})$ sequence of matrices of dimension
$\tO(n^{1-(r-1)/k}) \times \tO(n^{1-(r-1)/k})$ whose finite entries are at most $O(Mn^{r/k})$, we can compute their min-plus-product-convolution in 
\[
\widetilde{O}(M n^{r/k}n^{r/k}(n^{1-(r-1)/k})^\omega) =
\widetilde{O}(M n^{\omega + 2/k - (\omega - 2) \cdot (r-1)/k}) = \widetilde{O}(M n^{\omega + 2/k})
\]
time. 
\end{proof}

By \cref{cl:s-t-alg-1,cl:s-t-alg-2}, each iteration $r > 0$ of the algorithm takes
$\widetilde{O}(M n^{\omega+1/k}+ n^{\omega+2/k}) =
\widetilde{O}(M n^{\omega+2/k})$ time. As there are $k$ iterations in total, the overall running time of the algorithm becomes $\widetilde{O}(k M n^{\omega+2/k})$. 

Finally, setting $k = \log n$ gives the desired $\tO(M n^{\omega})$ running time. 

\section{Single-Source All-Hops Distances}

In this section, we study Single-Source All-Hops Distances. We begin by proving \cref{thm: single-source}, which we recall below:
\SSAllHop*

\begin{proof}

Similar to \cref{sec:st-version}, we describe an algorithm for any integer $k \ge 1$, and the algorithm will perform~$k$ iterations. We let $r$ be the index of the current iteration, starting from~$0$. In each iteration we choose a random sample $S_r$ of the vertices where $S_k=V$ and where $S_0\subseteq S_1 \subseteq S_2\subseteq \cdots \subseteq S_{k}$ (we already deviate from \cref{sec:st-version} in this step). We also make sure that $s\in S_r$, for every $0\le r\le k$.  The size of $S_r$ is $C n^{r/k}\log n$, for a sufficiently large constant $C>0$ (except $S_k$, which has size $n$). In the $r$-th iteration we compute $d_{\le h}(s,v)$ for every $v\in S_r$ and $h \in [n]$. At the end of the $k$-th iteration, we will have $d_{\le h}(s, v)$ for every $v \in S_k = V$ and $h \in [n]$, as desired.

For each iteration, we describe two algorithms. We can choose to use the more efficient one, depending on how big $r$ is. 

\paragraph{First algorithm. } For the first algorithm, we simply call the Single-Pair All-Hops Distances algorithm from \cref{thm:s-t-alg} $|S_r|$ times, and each time, we set $t$ to be a different vertex in $S_r$. Clearly, this algorithm is able to compute $d_{\le h}(s,v)$ for every $v\in S_r$ and $h \in [n]$. The running time is $\tO(|S_r| \cdot n^\omega) = \tO(n^{\omega + r / k}$). 

\paragraph{Second algorithm. }

The second algorithm works for $r > 0$, and has two main steps. 
\begin{enumerate}[label=(\arabic*)]
\item \label{item:SS-algo:step1} Compute the distances
$d_h(u, v)$ for all $h\in [n^{1-(r-1)/k}]$, $u\in S_{k-1}$, and $v\in
S_k$.

Let $W$ be the weighted adjacency matrix of $G$, and let $A_h$ for $h \in [n]$ be $d_h(S_{k-1}, V)$, the matrix indexed by $S_{k-1} \times V$ whose $(u, v)$ entry is $d_h(u, v)$. To complete this step, it suffices to compute $A_h$ for all $h\in [n^{1-(r-1)/k}]$.

Notice that $A_1 = W[S_{k-1}, V]$, and
for $1 < h\leq b$, $A_h = A_{h-1}\star W$; therefore, we can compute
$A_h$ for all $h\in [n^{1-(r-1)/k}]$ via doing $O(n^{1-(r-1)/k})$ min-plus products between
the matrix $A_h$ of dimension $\tO(n^{(r-1)/k})\times n$ and the matrix $W$ of
dimension $n \times n$, whose finite entries are both bounded by $O(n^{1-(r-1)/k}M)$ in absolute value. Thus, by \cref{lem:small-mpp}, we can compute $A_{h-1}\star W$ in $O(n^{1-(r-1)/k}Mn^\omega)$ time (it can be improved using rectangular matrix multiplication, but the running time is still essentially $O(n^{1-(r-1)/k}Mn^2)$ when $\omega = 2$), which suggests that this approach might be wasteful:
in this much time, we could have done a min-plus product
between matrices of dimension $n\times n$ with the same entry bounds. To utilize this intuition, we stack multiple $A_h$'s into a single  matrix: to compute $A_h$ for all $h\in [n^{1-(r-1)/k}]$, rather
than do $O(n^{1-(r-1)/k})$ min-plus products between $A_h$ and $W$ for one $A_h$
at a time, we now do only $O(\log n)$ min-plus products between a
matrix consisting of several $A_h$'s stacked together and $W$.

We first do repeated min-plus squaring to compute the following
matrices:
\[W, W^2, W^4, W^8, \ldots, W^{n^{1-(r-1)/k}/2}.
\]
Then we set a variable $\cA = A_1 = W[S_{k-1}, V]$, and for $i$ from $0$ to $\log (n^{1-(r-1)/k}/2)$, we
compute $\cB = \cA \star W^{2^i}$ and set $\cA = \mathsf{Stack}(\cA, \cB)$, where $\mathsf{Stack}(\cA, \cB)$ is a matrix where $\cA$ is on top of~$\cB$. If
$\cA$ consists of the matrices $A_1, \ldots, A_{2^i}$ stacked together, then
$\cB = \cA\star W^{2^i}$ consists of the matrices $A_{2^i+1}, A_{2^i+2},
\ldots, A_{2^{i+1}}$ stacked together, and by stacking $\cA$ on top of $\cB$,
we obtain the matrix consisting of the matrices $A_1, \ldots, A_{2^{i+1}}$
stacked together. Upon exiting the for-loop, $A$ contains the matrices
$A_1, \ldots, A_{n^{1-(r-1)/k}}$ stacked together, which together contain the
desired distances $d_h(u, v)$ for all $h\in [n^{1-(r-1)/k}]$, $u\in S_{k-1}$ and
$v\in S_k$.

We next analyze the running time of this step. To
compute $W, W^2, W^4, \ldots, W^{n^{1-(r-1)/k}/2}$, we do $O(\log n)$ min-plus
products between $n \times n$ matrices whose finite entries are bounded by $O(Mn^{1-(r-1)/k})$ in absolute value; then, to compute $A_h$ for all $h\in [n^{1-(r-1)/k}]$, we
do $O(\log n)$ min-plus products between an $O(n^{1-(r-1)/k} \cdot |S_{k-1}|) \times n = \tO(n) \times n$ matrix and an $n \times n$ matrix, both with finite entries bounded by $O(M n^{1-(r-1)/k} )$ in absolute value. By Lemma
\ref{lemma:hitting_set}, each of these min-plus products takes
$\tO(Mn^{1-(r-1)/k}n^\omega)$ time. Hence, the running time of this step is
$\tO(Mn^{\omega + 1-(r-1)/k})$. 

\item \label{item:SS-algo:step2} Use the distances computed in the previous step, together with the distances
$d_{\leq h}(s, u)$ for all $h\in [n]$ and $u\in S_{k-1}$ (which have
been computed in the previous iteration), to compute the
distances $d_{\le h}(s, v)$ for all $h\in [n]$ and $v\in S_r$.

Observe that we have already computed in step 2 the distances $d_h(s, v)$
for all $h\in [n^{1-(r-1)/k}]$ and $v\in S_r$ (as $s\in S_{r-1}$), from
which we can readily obtain $d_{\leq h}(s, v)$ for all $h\in [n^{1-(r-1)/k}]$ and
$v\in S_r$. Define the matrix $A$ of dimension $n\times |S_{r-1}|$ by
$A[h, u] = d_h(s, u)$, where $h\in [n]$ and $u\in S_{r-1}$, and define
the matrix $B$ of dimension $|S_{r-1}| \times (n^{1-(r-1)/k} \cdot |S_r|)$ by $B[u, (v, h)] = d_h(u,
v)$, where $u \in S_{r-1}$, and $v \in S_r$ and $h\in [n^{1-(r-1)/k}]$. 
To compute the
remaining distances $d_{\leq h}(s, v)$ for all $n^{1-(r-1)/k}+1\leq h\leq n$ and
$v\in S_r$, we will show that it suffices to compute the matrix $C = A\star B$,
from which we can efficiently extract these distances. For every $v \in S_r$ and $n^{1-(r-1)/k}+1\leq h\leq n$, we can fix one simple shortest path $p$ achieving $d_{\le h}(s, v)$. If $p$ uses at most $n^{1-(r-1)/k}$ hops, then it has already been captured by the distances computed in the previous step. Otherwise, by \cref{lemma:hitting_set}, w.h.p., there must be a vertex in $S_{r-1}$ that belongs to the last $n^{1-(r-1)/k}$ vertices on $p$. From this, for all $n^{1-(r-1)/k}+1\leq h\leq n$ and $v\in
S_r$, we obtain the following:
\begin{align*}
d_{\leq h}(s, v) &= \min\left\{d_{\le n^{1-(r-1)/k}}(s, v), \min_{h'\in [n^{1-(r-1)/k}], u\in S_{r-1}}\left\{d_{\leq h-h'}(s, u)+d_{h'}(u, v)\right\}\right\}\\
&=\min\left\{d_{\le n^{1-(r-1)/k}}(s, v), \min_{h'\in [n^{1-(r-1)/k}], u \in S_{r-1}} \left\{A[h-h', u] + B[u, (v, h')]\right\}\right\}\\
&= \min\left\{d_{\le n^{1-(r-1)/k}}(s, v), \min_{h' \in [n^{1-(r-1)/k}]} \left\{C[h-h',(v, h')]\right\}\right\}.
\end{align*}

The bottleneck of this step is to compute the min-plus product $C = A\star B$. Note that $A$ is an $n \times \tO(n^{(r-1)/k})$ matrix, $B$ is an $\tO(n^{(r-1)/k}) \times \tO(n^{1+1/k})$ matrix, and all finite entries of $B$ are bounded by $O(M n^{1-(r-1)/k})$ in absolute value, so by \cref{lem:mpp-finite-B}, we can compute $A \star B$ in time
\[
\tO(M n^{1-(r-1)/k} \cdot n^{\omega(1+(r-1)/k-t, (r-1)/k, 1+1/k)} + n^{2+1/k+t}).
\]
for any $t \ge 0$ (technically, we also need $t \le 1+(r-1)/k$. However, when $t > 1+(r-1)/k$, the second term in the running time becomes greater than $n^{3+r/k}$, slower than brute-force. Hence, we can ignore the upper bound on $t$ for simplicity). We can use 
\[
\omega(1+(r-1)/k-t, (r-1)/k, 1+1/k) \le 2 - t - r/k + ((r-1)/k)\cdot \omega
\]
(by splitting the matrix multiplication to several matrix multiplications between square matrices of dimension $n^{(r-1)/k} \times n^{(r-1)/k}$), to further bound the running time by 
\begin{align*}
\tO(M n^{1-(r-1)/k} \cdot n^{2 - t - r/k + ((r-1)/k)\cdot \omega} + n^{2+1/k+t}) & = \tO(M n^{3-t+1/k + (\omega - 2)(r-1)/k} + n^{2+1/k+t})\\
& = \tO(M n^{\omega + 1-t+1/k} + n^{2+1/k+t}).
\end{align*}

We can set $t$ so that $n^t = \sqrt{M} n^{(\omega - 1) / 2}$, so that the running time becomes $\tO(\sqrt{M} n^{(3+\omega) / 2 + 1/k})$. 
\end{enumerate}
Overall, the running time of the second algorithm is 
\[
\tO\left(Mn^{\omega + 1-(r-1)/k} + \sqrt{M} n^{(3+\omega) / 2 + 1/k} \right). 
\]

\paragraph{Final running time. } For $r \le k / 2$, we can choose the first algorithm, which has running time $\tO(n^{\omega + r / k}) = \tO(n^{\omega + 1/2})$. For $r > k/2$, we choose the second algorithm, which has running time 
\[
\tO\left(Mn^{\omega + 1-(r-1)/k} + \sqrt{M} n^{(3+\omega) / 2 + 1/k} \right) = \tO\left(Mn^{\omega + 1/2 + 1/k}\right).
\]
We also need to pay an $O(k)$ factor for the number of iterations, so the overall running time is $\tO\left(k Mn^{\omega + 1/2 + 1/k}\right)$. By setting $k = \log n$, we achieve the desired $\tO\left(Mn^{\omega + 1/2}\right)$ time. 
\end{proof}  

As discussed in \cref{sec:results}, \cref{thm: single-source} can also compute $d_h(s, v)$ for all $v \in V$ and $h \in [n]$ in directed acyclic graphs, in the same running time. The running time becomes $\tO(n^{2.5})$ for $M = O(1)$ and $\omega = 2$. Next, show that this running time is conditionally optimal by proving \cref{thm:ss-lower-bound}, which we recall below:
\SSLowerBound*

The reduction uses the following gadget (see Figure \ref{fig:treegadget}): 
\begin{claim}
\label{cl:ss-lower-bound}
    For every integer $\ell \ge 1$, there is a graph (in fact a tree) with $O(\ell \cdot 2^\ell)$ vertices, including a sink vertex $v$ and $2^\ell$ source vertices $u_1, \ldots, u_{2^\ell}$, so that 
    \begin{itemize}
        \item For every $i$, there is a unique path from $u_i$ to $v$.
        \item The number of hops on this path is $2^{\ell} - 1$.
        \item The total edge weight on this path is $i+2^{\ell} - 2$. 
    \end{itemize}
\end{claim}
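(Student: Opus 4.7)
\medskip
\noindent\textbf{Proof proposal.} I plan to construct $T_\ell$ by induction on $\ell$. The base case $\ell = 1$ is immediate: take vertices $u_1, u_2, v$ with directed edges $u_1 \to v$ of weight $1$ and $u_2 \to v$ of weight $2$, so the path from $u_i$ is the unique edge $(u_i, v)$, has $2^1 - 1 = 1$ hop, and weight $i = i + 2^1 - 2$, as required.

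For the inductive step, assuming $T_\ell$ exists, I take two disjoint copies $T_\ell^A$ and $T_\ell^B$, with sinks $v^A, v^B$ and sources $u_1^A, \dots, u_{2^\ell}^A$ and $u_1^B, \dots, u_{2^\ell}^B$ respectively. I introduce a fresh sink $v^{\ell+1}$ and attach two directed paths terminating at it: a path from $v^A$ to $v^{\ell+1}$ with $2^\ell$ edges each of weight $1$, and a path from $v^B$ to $v^{\ell+1}$ with $2^\ell$ edges each of weight $2$. The new source list is obtained by relabelling $u_i^A$ as the $i$-th source and $u_j^B$ as the $(2^\ell + j)$-th source. The underlying undirected graph remains a tree since we are gluing two trees by two internally disjoint paths sharing only the new endpoint $v^{\ell+1}$, so in particular every source has a unique directed path to $v^{\ell+1}$.

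To verify the weight and hop-count invariants, I use the inductive hypothesis: the path from $u_i^A$ to $v^A$ has $2^\ell - 1$ hops and weight $i + 2^\ell - 2$, and appending the all-weight-$1$ tail of $2^\ell$ edges produces $2^{\ell+1} - 1$ hops and weight $(i + 2^\ell - 2) + 2^\ell = i + 2^{\ell+1} - 2$, matching the requirement at source index $i \in [1, 2^\ell]$. Similarly, the path from $u_j^B$ through the all-weight-$2$ tail has $2^{\ell+1} - 1$ hops and weight $(j + 2^\ell - 2) + 2^{\ell+1} = (2^\ell + j) + 2^{\ell+1} - 2$, matching the requirement at source index $2^\ell + j$. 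The size recurrence is $|T_{\ell+1}| = 2|T_\ell| + O(2^\ell)$, which solves to $|T_\ell| = O(\ell \cdot 2^\ell)$.

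The only substantive content is picking the weights of the two tails correctly. The critical observation is that the weight-$1$ tail adds $2^\ell$ to every root-path weight in $T_\ell^A$ while the weight-$2$ tail adds $2^{\ell+1}$ to every root-path weight in $T_\ell^B$, and this additive gap of $2^\ell$ is exactly the index offset between the two halves of the new source list, which is why the formula $i + 2^{\ell+1} - 2$ holds uniformly over $i \in [1, 2^{\ell+1}]$. Beyond this arithmetic, there is no real obstacle.
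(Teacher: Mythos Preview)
Your proposal is correct and is essentially the same construction as the paper's: the paper describes the gadget non-recursively as a complete binary tree whose edge from height $i{+}1$ to height $i$ is replaced by a chain of $2^i$ edges (all weight $1$ on left-child branches, all weight $2$ on right-child branches), and then says the three properties ``can be verified straightforwardly, say by induction.'' Your recursive construction unrolls to exactly this tree, and your inductive verification is precisely the argument the paper defers.
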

\begin{proof}

\begin{figure}[ht]\centering
\includegraphics[width=0.4\textwidth]{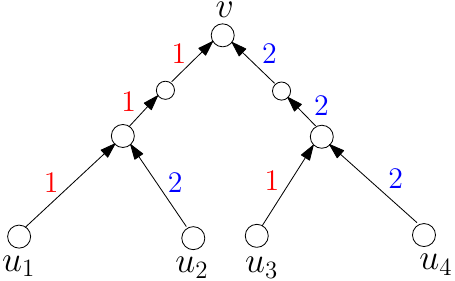}
\caption{A depiction of our gadget for $\ell=2$. Note that the distance between $u_i$ and $v$ is $2+i$ for every $i=1,2,3,4$. The number of hops on each $u_i$ to $v$ path is $3$.}
\label{fig:treegadget}
\end{figure}

We construct the graph by replacing tree edges of a complete binary tree with $2^\ell$ leaves with chains. The leaves of the binary tree are labeled by $u_1, \ldots, u_{2^\ell}$, and the root of the tree is labeled by $v$. 

Let the height of a vertex in the tree be its distance to the nearest leaf vertex. For an edge between a height $i+1$ vertex and a height $i$ vertex for $i \ge 0$, we replace it with a chain of $2^{i}$ edges. If the height $i$ vertex is the left child of the height $i+1$ vertex, all edges on this chain have weight $1$; otherwise, all edges on this chain have weight $2$. 

The claim can be verified straightforwardly, say by induction. 
\end{proof}

\begin{proof}[Proof of \cref{thm:ss-lower-bound}]

It was shown in \cite{Zwick02} one can solve  unweighted directed APSP by solving $\tO(1)$ instances of min-plus products  an $n \times (n/x)$ matrix $A$ and an $(n/x) \times n$ matrix $B$ with entries in $[x]$, for various values of $x \in [n]$ (formulated this way in \cite{DBLP:conf/icalp/ChanWX21}).

To compute the min-plus product between an $n \times (n/x)$ matrix $A$ and an $(n/x) \times n$ matrix $B$ with entries in $[x]$, we create the following graph (we assume $x = 2^\ell$ for some integer $\ell$ WLOG):
\begin{itemize}
    \item There are $n$ vertices $a_1, a_2, \ldots, a_n$ and $b_1, b_2, \ldots, b_n$. For every $1 \le i < n$, there is an edge of weight $1$ from $a_i$ to $a_{i+1}$. 
    \item For every $k \in [n/x]$, we use two copies of the gadgets from \cref{cl:ss-lower-bound}. We label the source vertices of the first copy by $c_{k,1}, \ldots, c_{k, x}$, and the source vertices of the second copy by $c'_{k, 1}, \ldots, c'_{k,x}$. We reverse all edges of the second copy. Finally, we identify the two sink vertices of the two copies. 
    \item For every $i, k$, we add an edge of weight $1$ from $a_i$ to $c_{k, A[i, k]}$. For every $k, j$, we add an edge of weight $1$ from $c'_{B[k, j]}$ to $b_j$. 
\end{itemize}
Let $C = A \star B$. Then we claim $C[i, j] = d_{i-1+2x}(s, b_j)-(i-3+2x)$. This is because the only type of path from $a_1$ to $b_j$ using $i-1+2x$ edges has the form $a_1 \rightarrow a_i \rightarrow c_{k, A[i, k]} \rightarrow c_{k, B[k, j]}' \rightarrow b_j$ for some choice of $k$. The total edge weights of such a path will be $i-3+2x + A[i, k] + B[k, j]$. Therefore, the shortest such path will choose the $k$ that minimizes $A[i, k] + B[k, j]$, so the claim follows. 
\end{proof}

\section{All-Pairs All-Hops Distances}

In this section, we prove \cref{thm:ap-upper-bound}:

\APAllHop*

\begin{proof}
    The high-level structure of the algorithm is as follows: it computes pairwise distances for hop lengths up to $(3/2)^k$ in the $k$-th iteration, and then use answers computed in previous iterations to help the next iteration. 

    Let $K = (3/2)^{k-1}$. Suppose we have computed $d_{\le h}(u, v)$ for $u, v \in V$ and $h \le K$, then we aim to compute $d_{\le h}(u, v)$ for $u, v \in V$ and $h \le 3K/2$. 
    
    Let $S_k \subseteq V$ be a random subset of vertices of size $O\left(\frac{n \log n}{K}\right)$. For any $u, v \in V$ and $K < h \le 3K/2$, fix a simple shortest path $p$ from $u$ to $v$ using at most $h$ hops. If $|p| \le K$, then $d_h(u, v) = d_{K}(u, v)$, which has been computed in previous iterations. Otherwise, with high probability, there exists a vertex $x \in S_k$ that splits $p$ into two subpaths $p_1$ (from $u$ to $x$) and $p_2$ (from $x$ to $v$), so that $|p_1|, |p_2| \le K$. Therefore, it suffices to compute 
    \[
    \min_{\substack{x \in S_k \\ h - K \le h' \le K}} \left\{ d_{\le h'}(u, x) + d_{\le h - h'}(x, v)\right\}.
    \]
    Therefore, in order to compute the whole sequence $\left\langle d_{\le h}(u, v)\right\rangle_{h = K + 1}^{3K/2}$, it suffices to compute the min-plus convolution between $\left\langle d_{\le h}(u, x)\right\rangle_{h=1}^{K}$ and $\left\langle d_{\le h}(x, v)\right\rangle_{h=1}^{K}$ for every $x \in S_k$. Observe that both sequences are non-increasing, and have integer entries bounded by $O(K M)$. Therefore, by \cite[Theorem 7]{BDP24}, each min-plus convolution can be computed in $\tO(K \cdot \sqrt{KM}) = \tO(M^{0.5} K^{1.5})$ time. Summing over all $u, v \in V$ and $x \in S_k$, the total running time is $\tO(M^{0.5} K^{0.5} n^3)$. 

    The largest value of $K$ is $\Theta(n)$, and the number of iterations we need to perform is $O(\log n)$, so the overall running time is $\tO(M^{0.5}n^{3.5})$. 
\end{proof}

\section{All-Hops Distance Oracles}

In this section, we study All-Hops Distance Oracles, where given an $n$-vertex $m$-edge weighted directed graph $G = (V, E, w)$ to preprocess, the oracle needs to answer
queries of the form $(u, v, h)$, where $u, v\in V$ and $h\in [n]$ with the distance $d_{\leq h}(u, v)$. 

We first give two simple constructions that have $O(1)$ query
time (at the cost of a large preprocessing time).

\begin{prop}
There exists an all-hops distance oracle for weighted graphs without negative cycles that has $n^4 / 2^{\Theta(\sqrt{\log n})}$ preprocessing time and $O(1)$
query time and uses $O(n^3)$ space.
\end{prop}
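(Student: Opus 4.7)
The plan is to tabulate $d_{\le h}(u,v)$ for every triple $(u,v,h) \in V \times V \times [n-1]$ in a three-dimensional array of size $O(n^3)$; queries are then trivial $O(1)$ array lookups. So the entire task is to fill this table in $n^4/2^{\Theta(\sqrt{\log n})}$ time.

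To fill the table I would exploit the standard reformulation via min-plus matrix powers. Let $W$ be the $n \times n$ matrix with $W[u,u] = 0$, $W[u,v] = w(u,v)$ when $(u,v) \in E$, and $W[u,v] = \infty$ otherwise. The zero-weight diagonal entries allow one to ``pad'' any walk of fewer than $h$ hops into an $h$-hop walk by repeating the endpoint via a self-loop. Combined with the absence of negative cycles, a straightforward induction shows that $W^h[u,v] = d_{\le h}(u,v)$, where the power is taken in the min-plus semiring.

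I would then compute $D_h := W^h$ iteratively for $h = 1, 2, \ldots, n-1$ by setting $D_1 = W$ and $D_h = D_{h-1} \star W$. Each min-plus product of two $n \times n$ matrices with arbitrary real entries can be carried out in $n^3/2^{\Theta(\sqrt{\log n})}$ time using Williams' algorithm \cite{Williams18}. Since there are $n-1$ such products, total preprocessing time is $(n-1) \cdot n^3 / 2^{\Theta(\sqrt{\log n})} = n^4 / 2^{\Theta(\sqrt{\log n})}$, as required. The stored matrices $D_1, \ldots, D_{n-1}$ occupy $O(n^3)$ space in total.

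There is no real obstacle here beyond checking that the min-plus-power identity $W^h[u,v] = d_{\le h}(u,v)$ is valid in the ``no negative cycles'' setting (which it is: negative cycles would be the only way a shorter walk could outperform a shorter simple path after self-loop padding, and these are ruled out by hypothesis). Everything else is a direct application of Williams' subcubic min-plus product as a black box.
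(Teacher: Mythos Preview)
Your proposal is correct and is essentially the same as the paper's proof: both compute the min-plus powers $W, W^2, \ldots, W^{n-1}$ iteratively via $n-1$ applications of Williams' $n^3/2^{\Theta(\sqrt{\log n})}$ min-plus product and store the results. The only (cosmetic) difference is that the paper leaves the diagonal of $W$ as $\infty$ so that $W^h$ encodes $d_h$ and then takes prefix minima to recover $d_{\le h}$, whereas you set the diagonal to $0$ so that $W^h$ encodes $d_{\le h}$ directly.
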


\begin{proof}
Given a directed graph $G = (V,
E)$ with arbitrary edge weights, we compute the matrices $W^i$ for all
$i\in [n]$ via $n-1$ min-plus products (where $W$ is the weighted
adjacency matrix of $G$). These matrices together contain the
distances $d_h(u, v)$ for all $u, v\in V$ and $h\in [n]$, from which
we can readily obtain the desired distances $d_{\leq h}(u, v)$ for all
$u, v\in V$ and $h\in [n]$. Each min-plus product can be solved in $n^3 / 2^{\Theta(\sqrt{\log n})}$~\cite{Williams18}, so the preprocessing time is $n^4 / 2^{\Theta(\sqrt{\log n})}$
as we do $O(n)$ min-plus products. The space used is $O(n^3)$, and the query time is $O(1)$.
\end{proof}

Unless the graph $G$ is extremely dense, the following construction
has a better preprocessing time.

\begin{prop}
There exists an all-hops distance oracle for weighted graphs without negative cycles that has $O(mn^2)$ preprocessing and $O(1)$ query
time and uses $O(n^3)$ space.
\end{prop}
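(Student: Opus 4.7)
The plan is essentially to run Bellman-Ford from every vertex and store all the answers in a three-dimensional table. More precisely, for each source $u \in V$, I would invoke the algorithm $\BF{G}{u}{n}$ from the preliminaries, which computes $d_{\leq h}(u, v)$ for every $v \in V$ and every $1 \le h \le n$ in $O(mn)$ time. Doing this for all $n$ choices of $u$ gives preprocessing time $O(mn^2)$ in total.

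I would store the output in a table $T$ of dimensions $n \times n \times n$, where $T[u, v, h] := d_{\leq h}(u, v)$, using $O(n^3)$ space. A query $(u, v, h)$ is then answered by returning $T[u, v, h]$, which is a single array lookup and therefore takes $O(1)$ time. Correctness is immediate from the correctness of Bellman-Ford.

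There is no real obstacle here: the Bellman-Ford subroutine is already stated to handle the ``at most $h$'' distances correctly in graphs without negative cycles, and the three bounds (preprocessing $O(mn^2)$, space $O(n^3)$, query $O(1)$) all follow directly. The only thing worth noting is that we need the graph to have no negative cycles for $d_{\leq h}(u, v)$ to be well-defined and for Bellman-Ford to produce the correct values, which is precisely the hypothesis of the proposition.
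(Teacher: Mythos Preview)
Your proposal is correct and matches the paper's proof essentially verbatim: run $\BF{G}{u}{n}$ from every vertex $u$, store all $d_{\leq h}(u,v)$ values in an $O(n^3)$ table, and answer queries by lookup.
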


\begin{proof}
Given a directed graph $G = (V,
E)$ with arbitrary edge weights, we call $\BF{G}{u}{n}$ for every
vertex $u\in V$, which gives us the distances $d_{\leq h}(u, v)$ for
all $u, v\in V$ and $h\in [n]$. The preprocessing time is $O(mn^2)$
as we run an $O(mn)$-time algorithm on $n$ vertices, the space used is
$O(n^3)$, and the query time is $O(1)$.
\end{proof}

We next present a construction that has $\widetilde{O}(mn)$ preprocessing time and $\widetilde{O}(n)$ query time.
\DOmn*

\begin{proof}
Given a directed graph $G = (V,
E)$ with arbitrary edge weights, we proceed in $\log n$ rounds of
iterations, starting with iteration $0$. In iteration $i$, we first
sample uniformly at random a set of vertices $S_i \subseteq V$ of size
$O(n\log n/2^i)$. Then, for every $s\in S_i$, we call $\BF{G}{s}{
2^{i+1}}$ and $\BF{R(G)}{s}{2^{i+1}}$, where we use $R(G)$ to denote a copy of $G$ where the direction of every edge is reversed,
which together give us the distances $d_{\leq h}(s, u)$ and $d_{\leq
h}(u, s)$ for all $u\in V$ and $h\in [2^{i+1}]$.

Next, we discuss how to answer a query $(u, v, h)$, where $u, v\in V$ and $h\in [n]$.
Let $i^*$ be the unique integer satisfying $2^{i^*}\leq h <
2^{i^*+1}$. 
 The oracle computes
\[d_{\leq h}(u, v) = \min_{\substack{i\in \{0,\ldots,i^*\}, s\in S_i \\ 
h'\in [\min\{h, 2^{i+1}\}]}} \left\{ d_{\leq h'}(u, s)+d_{\leq \min\{h-h', 2^{i+1}\}}(s, v)\right\},
\]

Suppose that $d_{\leq h}(u,v)$ is achieved by a simple $b$-hop path $P$ for some $b\leq h$ and let $i$ be the unique integer satisfying $2^i\leq b < 2^{i+1}$.
By Lemma \ref{lemma:hitting_set}, with high probability,
there is a vertex in $S_i$ that lies on $P$, and correctness follows.

Summing over all iterations of the outer loop, the preprocessing time
is
\[O\left(\sum_{i = 0}^{\log n}\frac{n\log n}{2^i}2^{i+1}m\right) = \widetilde{O}(mn),
\]
and the space used is
\[O\left(\sum_{i = 0}^{\log n}\frac{n\log n}{2^i}n\right) = \widetilde{O}(n^2).
\]
The query time is, within constant factors,
\[\sum_{i\leq i^*} 2^{i+1}n\log n/2^i\leq O(n\log^2 n)= \widetilde{O}(n).\]

\end{proof}

We next give a construction that has $n^{3} / 2^{\Theta(\sqrt{\log n})}$ preprocessing
time and $\widetilde{O}(n)$ query time. This construction improves the
preprocessing time of the construction given in Theorem
\ref{thm:do-mn} in the setting where the graph is dense. The basic idea
is to adapt the Yuster-Zwick distance oracle \cite{YuZw05} to the all-hops distance case.

\DOMPP*

\begin{proof}

Given a weighted directed graph $G = (V, E, w)$,  we
first initialize a variable $S_0 = V$, and, for $i = 1, \ldots,
\log_{3/2} n$, we sample a uniformly random subset $S_i\sub S_{i-1}$
of size $Cn\log n/(3/2)^i$ for sufficiently large constant $C$. Let us use $A_{i, h}$ (resp. $A_{i, h}'$)
for the matrix of dimension $|S_i|\times |V|$ (resp. $|V|\times
|S_i|$) with entries given by $A_{i, h}[s, v] = d_{\leq h}(s, v)$
(resp. $A_{i, h}'[v, s] = d_{\leq h}(v, s)$), where $s\in S_i$ and
$v\in V$. Note in particular that $A_{0, 1} = A_{0, 1}' = W$.

We proceed in $\log_{3/2} n$ iterations, starting with iteration
$1$. In iteration $i$, we use the matrices $A_{i-1, h'}$ and $A_{i-1,
  h'}'$ for $h'\in [(3/2)^{i-1}]$ to compute the matrices $A_{i, h}$
and $A_{i, h}'$ for all $h\in[(3/2)^i]$. Let $h\in [(3/2)^i]$. We only describe how to compute $A_{i, h}$ as the computation of
$A_{i, h}'$ is symmetric.

If $h\leq (3/2)^{i-1}$, then, since $S_i\sub S_{i-1}$, we can extract
$A_{i, h}$ from $A_{i-1, h}$. Now assume $(3/2)^{i-1} < h\leq
(3/2)^i$. Define the matrix $B$ of dimension $(1/2)(3/2)^{i-1}|S_i|\times
(3/2)^{i-1}|S_{i-1}|$ as follows: if $h-h'\leq (3/2)^{i-1}$, then $B[(s, h), (s', h')] = A_{i-1, h-h'}[s, s']$, and otherwise, $B[(s, h), (s', h')] = \infty$, for $s\in S_i$, $s'\in S_{i-1}$, $(3/2)^{i-1} < h\leq
(3/2)^i$, and $h'\in [(3/2)^{i-1}]$. Define the matrix $C$ of dimension $(3/2)^{i-1}|S_{i-1}|\times |V|$ by $C[(s', h'), v] = A_{i-1, h'}[s',
  v]$ for $s'\in S_{i-1}$, $h'\in [(3/2)^{i-1}]$, and $v\in V$. Let $D = B\star C$, and we describe how to obtain $A_{i, h}$
using $D$.

For any $s\in S_i$ and $v\in V$, let $p$ be a simple path achieving $d_{\le h}(s, v)$. If $|p|\leq (3/2)^{i-1}$, then we
have already computed $d_{\leq h}(s, v)$ prior to the start of the
iteration. Otherwise, let us define a middle third of this path $p$ to
be a subpath $p_2 = (u_1, \ldots, u_2)$ of length $
(1/3)(3/2)^i$ such that we can write $p$ as the concatenation of $p_1, p_2, p_3$, where $p_1 = (s, \ldots, u_1)$ and $p_3 = (u_2, \ldots, v)$ are
subpaths of $p$ of lengths at most $(1/3)(3/2)^i$. For any middle third $p_2$ of $p$, by Lemma \ref{lemma:hitting_set}, w.h.p.,
there is a vertex $s'$ in $S_{i-1}$ that is on $p_2$; and furthermore,
by definition of middle third, we have that both of the subpaths $(s,
\ldots, s')$ and $(s', \ldots, v)$ of $p$ have hop-lengths at most
$(3/2)^{i-1}$. This implies the following:
\begin{align*}
d_{\leq h}(s, v) &= \min\left\{d_{\leq (3/2)^{i-1}} (s, v), \min_{\substack{h'\in [(3/2)^{i-1}]: h - h' \le (3/2)^{i-1} \\ s'\in S_{i-1}}} \left\{d_{\leq h-h'}(s, s') + d_{\leq h'}(s', v)\right\}\right\}\\
&=\min\left\{d_{\leq (3/2)^{i-1}} (s, v), \min_{h'\in [(3/2)^{i-1}], s'\in S_{i-1}} \left\{B[(s, h), (s', h')]+C[(s', h'), v]\right\}\right\}\\
&= \min\left\{A_{i-1, (3/2)^{i-1}}[s, v], D[(s, h), v]\right\}.
\end{align*}
Thus, we have $A_{i, h}[s, v] = \min\{A_{i-1, (3/2)^{i-1}}[s, v], D[(s, h), v]\}$.

Now we discuss how to answer a query $(u, v, h)$, where $u, v\in V$ and $h\in [n]$.
Let $j^*$ be the unique integer satisfying $(3/2)^{j^*-1}< h \le
(3/2)^{j^*}$, then
\[d_{\leq h}(u, v) = \min_{j\leq j^*, h'\in [\min\{h, (3/2)^j\}], s'\in S_j}\left\{A_{j, \min\{h-h', (3/2)^j\}}[u, s']+A_{j, h'}'[s', v]\right\}.
\]

Suppose that $d_{\leq h}(u,v)$ is achieved by a simple $b$-hop path $P$ for some $b\leq h$ and let $j$ be the unique integer satisfying $(3/2)^{j-1} < b \le (3/2)^{j}$.
By Lemma \ref{lemma:hitting_set}, with high probability,
there is a vertex in $S_j$ that lies on $P$, and correctness follows.

In each iteration, we do a min-plus product between matrices of
dimension $\widetilde{O}(n)\times\widetilde{O}(n)$, which is
$n^{3}/2^{\Theta(\sqrt{\log n})}$ time; summing over all $\log_{3/2} n$
iterations, we obtain that the preprocessing time is
$n^{3}/2^{\Theta(\sqrt{\log n})}$ and that the space used is
$\widetilde{O}(n^2)$.

The query time is, within constant factors,
\[\sum_{j\leq j^*} (3/2)^{j+1}n\log n/(3/2)^j\leq O(n\log^2 n)= \widetilde{O}(n).\]
\end{proof}

Finally, we show a more efficient distance oracle for graphs with integer edge weights in $\{-M, \ldots, M\}$. 

\DObounded*

\begin{proof}
    For every $k \ge 0$ where $K=(3/2)^k \le n$, we will sample a random subset $S_k \subseteq V$ of size $\frac{C n \log n}{K}$ for sufficiently large constant $C$ and we furthermore require $S_k \subseteq S_{k-1}$ for every $k$, and compute and store distances $d_{\le h}(u, s)$ and $d_{\le h}(u, s)$ for every $u \in V, s \in S_k, h \in [K]$. Clearly, the space usage is $\tO(n^2)$, and the $\tO(n)$ query time follows from essentially the same query algorithm from \cref{thm:do-mn}. Therefore, in the following, we focus on the preprocessing time, and we provide two alternative algorithms. Also, we focus on computing $d_{\le h}(s, u)$, as $d_{\le h}(u, s)$ is symmetric. 

    The first algorithm computes $d_{\le h}(s, u)$ for every $u \in V, s \in S_{k+1}, h \in [3K/2]$ assuming that  $d_{\le h}(s, u)$ for every $u \in V, s \in S_{k}, h \in [K]$ are given. For any $u \in V, s \in S_{k+1}, h \in [3K/2]$, let $p$ be a simple path achieving $d_{\le h}(s, u)$. If $|p| \le K$, then $d_{h}(s, u) = d_K(s, u)$, which is given as $S_{k+1} \subseteq S_k$. Otherwise, with high probability, there exists $x \in S_k$ that splits $p$ into two subpaths $p_1$ (from $s$ to $x$) and $p_2$ (from $x$ to $u$), so that $|p_1|, |p_2| \le K$. Therefore, similar to the proof of \cref{thm:ap-upper-bound}, in order to compute $\left\langle d_h(s, u)\right\rangle_{h=K+1}^{3K/2}$, it suffices to compute the min-plus convolution between $\left\langle d_h(s, x)\right\rangle_{h=1}^{K}$ and $\left\langle d_h(x, u)\right\rangle_{h=1}^{K}$ for every $x \in S_k$. Note that the input sequences are non-increasing, and are bounded by $O(KM)$, so each min-plus convolution can be computed in $\tO(K \sqrt{KM})$ time by \cite[Theorem 7]{BDP24}, and the total number of min-plus convolutions we need to compute is $O(|S_{k+1}| |V| |S_k|) = \tO(n^3 / K^2)$. Therefore, the overall time complexity is $\tO(M^{0.5} K^{-0.5} n^3)$. 

    Next, we provide an alternative algorithm for computing $d_{\le h}(s, u)$ for every $u \in V, s \in S_{k}, h \in [K]$ that is more efficient for smaller values of $K$. In fact, we have already shown a similar algorithm in the proof for \cref{thm: single-source}, but we repeat here for completeness. Let $A_h$ be a matrix indexed by $S_k \times V$ where $A_h[s, v] = d_h(s, u)$. Clearly, if we compute $A_1, \ldots, A_K$, we can compute the desired distances easily. Let $W$ be the weighted adjacency matrix of the graph, then it is not difficult to see that $A_{h+1} = A_h \star W$. First, we use repeated min-plus product squaring to compute (we round $K$ to the next power of $2$ WLOG)
    \[
    W, W^2, W^4, \ldots, W^{K}. 
    \]
    Each min-plus product is between $n \times n$ matrices with entries bounded by $O(KM)$, so the running time is $\tO(KMn^\omega)$ by \cref{lem:small-mpp}. 

    Next, for any $0 \le i \le \log(K) - 1$, if we stack $A_1, A_2, \ldots, A_{2^i}$ together, and multiply the stacked matrix with $W^{2^i}$, we will obtain a matrix that is $A_{2^i+1}, A_{2^i+2}, \ldots, A_{2^{i+1}}$ stacked together. The stacked matrix has dimension $O(|S_k| \cdot 2^i) = O(|S_k| \cdot K) = \tO(n)$, and the entries are integers bounded by $O(KM)$, so each min-plus product can be computed in $\tO(KM n^\omega)$ time by \cref{lem:small-mpp}. Thus, the overall, running time is $\tO(KM n^\omega)$. 

    We can choose one of the two algorithms depending on whether $K$ is big or small. The worst-case running time is attained when $K =n^{2-2\omega / 3} / M^{1/3}$, and the final preprocessing time is $\tO(M^{2/3} n^{(6+\omega)/3})$. 
\end{proof}

\section{Conditional Lower Bounds for Weighted Graphs}

In this section, we prove \cref{thm:weighted-lower-bound}, which we first recall below:

\APLowerBound*

We first define the following intermediate problem: 
\begin{problem}
\label{prob:prob1}
Given two $n \times n$ matrices $A, B$, compute 
$$\min_{x + y = \ell} \left\{ A[i, x] + B[j, y]\right\}$$
for every $i, j \in [n], \ell \in [2n]$. 
\end{problem}

\begin{lemma}
Under the Min-Plus Convolution hypothesis, \cref{prob:prob1} requires $n^{4-o(1)}$ time. 
\end{lemma}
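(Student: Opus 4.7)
The plan is to reduce the standard Min-Plus Convolution problem on sequences of length $N$ to an instance of \cref{prob:prob1} on $n \times n$ matrices with $n = \lceil \sqrt{N} \rceil$. If \cref{prob:prob1} admits an $O(n^{4-\eps})$ algorithm, then min-plus convolution on length-$N$ sequences would be solvable in $O(N^{2-\eps/2})$ time, contradicting the Min-Plus Convolution hypothesis.

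Given sequences $a = (a_0, \ldots, a_{N-1})$ and $b = (b_0, \ldots, b_{N-1})$, I would reshape them into $n \times n$ matrices via the identification $x = in + x'$ with $i, x' \in \{0, \ldots, n-1\}$, setting $A[i, x'] := a_{in+x'}$ and analogously $B[j, y'] := b_{jn+y'}$ (padding with $+\infty$ if $N$ is not a perfect square). For any target index $\ell$ of the output convolution $c_\ell = \min_{x+y=\ell}\{a_x + b_y\}$, the decomposition $\ell = (i+j)n + (x'+y')$ separates the ``outer'' block indices $i, j$ from the ``inner'' offset sum. Writing $M_{i,j,s} := \min_{x'+y' = s}\{A[i,x'] + B[j,y']\}$, which is precisely what \cref{prob:prob1} computes on $(A, B)$, we obtain
\[
c_\ell \EQ \min_{\substack{i, j \in \{0, \ldots, n-1\} \\ s = \ell - (i+j)n \in \{0, \ldots, 2n-2\}}} M_{i,j,s}.
\]

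For each fixed $\ell$, only one or two values of $k := i+j$ yield a valid shift $s = \ell - kn \in \{0, \ldots, 2n-2\}$, and each such $k$ admits at most $n$ decompositions into $(i,j)$. Hence post-processing the output of \cref{prob:prob1} into the convolution $c$ takes $O(n)$ time per $\ell$, for a total of $O(n \cdot N) = O(N^{3/2})$ — negligible compared to the target running time of $N^{2-\eps/2}$. Correctness is immediate because every decomposition $x + y = \ell$ with $x, y \in \{0, \ldots, N-1\}$ corresponds uniquely to a tuple $(i, j, x', y')$ with all coordinates in $\{0, \ldots, n-1\}$, so the minimum over decompositions is preserved. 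I do not anticipate a serious obstacle; the only care required is boundary handling (padding $N$ and the output range $[2n]$ correctly), which is routine.
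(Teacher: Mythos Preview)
Your proposal is correct and follows essentially the same approach as the paper: both split the length-$N$ sequences into $\sqrt{N}$ blocks of length $\sqrt{N}$, observe that computing all pairwise block convolutions is precisely an instance of \cref{prob:prob1} of size $n=\sqrt{N}$, and conclude that an $O(n^{4-\eps})$ algorithm yields an $O(N^{2-\eps/2})$ algorithm for Min-Plus Convolution. You spell out the post-processing recombination step more explicitly than the paper does, but the reduction is identical.
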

\begin{proof}
Given a Min-Plus Convolution instance between two length-$N$ sequences $X, Y$, we split these two sequences into $\sqrt{N}$ subsequences $X_1, \ldots, X_{\sqrt{N}}, Y_1 \ldots, Y_{\sqrt{N}}$. Then it suffices to compute the Min-Plus Convolution between $X_i$ and $Y_j$ for every $i, j$. Letting $A[i, x] = X_i[x]$, and $B[j, y] = Y_j[y]$, this is exactly an instance of \cref{prob:prob1} of size $n = \sqrt{N}$. Thus, if one can solve \cref{prob:prob1} in $O(n^{4-\eps})$ time for $\eps > 0$, then one can also solve the Min-Plus Convolution instance in $O(N^{2-\eps/2})$ time, contradicting the Min-Plus Convolution hypothesis. 
\end{proof}

\begin{lemma}
\label{lem:weighted-exact-lower-bound}
Under the Min-Plus Convolution hypothesis, computing $d_{\ell}(i, j)$ in a weighted graph $G = (V, E, w)$ for every $i, j \in V, \ell \in [|V|]$ requires $|V|^{4-o(1)}$ time. 
\end{lemma}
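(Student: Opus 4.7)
The plan is to reduce \cref{prob:prob1} to the exact-hop all-pairs shortest path problem, so that the previous lemma immediately yields the claimed $|V|^{4-o(1)}$ lower bound. Given matrices $A, B \in \mathbb{Z}^{n \times n}$, I will build a directed graph on $O(n)$ vertices whose exact-hop distances between distinguished source/target vertices encode the expression $\min_{x+y=\ell}\{A[i,x]+B[j,y]\}$.

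The graph I would construct is a DAG consisting of a ``left chain'' and a ``right chain'' glued at a single pivot. Concretely, introduce source vertices $s_1,\ldots,s_n$, target vertices $t_1,\ldots,t_n$, a left chain $p_n \to p_{n-1} \to \cdots \to p_0$ of zero-weight edges, and a right chain $q_0 \to q_1 \to \cdots \to q_n$ of zero-weight edges, where we identify $q_0 = p_0$. For every $i,x \in [n]$ add an edge $s_i \to p_x$ of weight $A[i,x]$, and for every $j,y\in[n]$ add an edge $q_y \to t_j$ of weight $B[j,y]$. The resulting graph has $4n+1$ vertices.

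Any $s_i$-to-$t_j$ walk in this DAG is forced to pass through the unique pivot $p_0=q_0$, so it decomposes uniquely as $s_i \to p_x \to \cdots \to p_0 = q_0 \to \cdots \to q_y \to t_j$ for some $(x,y)\in[n]^2$; such a walk uses exactly $x+y+2$ hops and has weight $A[i,x]+B[j,y]$. Consequently, for every $\ell\in[4,2n+2] \subseteq [|V|]$,
\[
d_\ell(s_i,t_j) \;=\; \min_{\substack{x+y=\ell-2\\ x,y\in[n]}} \bigl\{A[i,x]+B[j,y]\bigr\}.
\]
Hence an $|V|^{4-\eps}$-time algorithm for computing exact-hop all-pairs distances would solve \cref{prob:prob1} in $O(n^{4-\eps})$ time, contradicting the previous lemma.

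The step requiring the most care is verifying that the exact-$\ell$ distance really picks up precisely the intended minimum, i.e., that no alternative $s_i$-to-$t_j$ walk of exactly $\ell$ hops can enter the picture. This is essentially immediate from the construction, but worth spelling out in the write-up: every $s_i$ has only outgoing edges into the left chain, every $t_j$ has only incoming edges from the right chain, and the two chains meet only at $p_0=q_0$, so the hop count rigidly splits into a left segment of length $x+1$ and a right segment of length $y+1$. No other structural cases arise, and the reduction then plugs directly into \cref{prob:prob1}.
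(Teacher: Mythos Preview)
Your proposal is correct and is essentially identical to the paper's own proof: the paper builds a five-layer DAG with a singleton pivot $s$ in the middle, edges $(i,x)$ of weight $A[i,x]$, chain edges $(x,x-1)$ of weight $0$, and the symmetric right half, which is exactly your construction with $p_0=q_0$ playing the role of their $s$ and your chains $p_n\to\cdots\to p_0$, $q_0\to\cdots\to q_n$ matching their second and fourth layers. The hop-count and weight analysis, and the final reading-off of $\min_{x+y=\ell}\{A[i,x]+B[j,y]\}$ from $d_{\ell+2}(s_i,t_j)$, coincide verbatim.
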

\begin{proof}
We reduce an instance of \cref{prob:prob1} of dimension $n$ to an instance of the problem of computing $d_{\ell}(i, j)$ on a graph $G=(V,E, w)$ with $O(n)$ vertices for every pair of vertices $i, j$ and every $\ell \in [|V|]$. This way, as \cref{prob:prob1} requires $n^{4-o(1)}$ time under the Min-Plus Convolution hypothesis, the  problem of computing $d_{\ell}(i, j)$ requires $|V|^{4-o(1)}$ time under the Min-Plus Convolution hypothesis. 

The graph will consist of five layers, where all layers except the third layer have $n$ nodes, and the third layer contains a singleton $s$. The first, second, fourth, fifth layers represent $i, x, y, j$ respectively. 

Between the first and second layer, we add edge $(i, x)$ with weight $A[i, x]$. Inside the second layer, we add an edge $(x, x-1)$ for every $2 \le x \le n$ with weight $0$. Between the second and third layer, we add an edge $(1, s)$ with weight $0$. This way, the only path from $i$ to $s$ with $x+1$ edges has weight $A[i, x]$. We similarly create the right half of the graph: between the fourth and fifth layer, we add an edge $(y, j)$ with weight $B[j, y]$; inside the fourth layer, we add an edge $(y - 1, y)$ for every $2 \le y \le n$ with weight $0$; between the third and fourth layer, we add an edge $(s, 1)$ with weight $0$. The only path from $s$ to $j$ with $y+1$ edges has weight $B[j, y]$. 

Thus, the shortest path from $i$ to $j$ with $\ell + 2$ edges will have weight $\min_{x + y = \ell} \left\{ A[i, x] + B[j, y]\right\}$. 

This also implies that if we preprocess the graph in $O(n^{4-\eps})$ time for some $\eps>0$, then the $n^3$ queries given by $(i,j,\ell)$ must take $n^{1-o(1)}$ time (amortized).
\end{proof}

Given \cref{lem:weighted-exact-lower-bound}, the proof of \cref{thm:weighted-lower-bound} follows from the equivalence between the ``exact $h$'' and ``at most $h$'' versions of the problem for weighted graphs, as mentioned in \cref{sec:intro}. We nevertheless still repeat the argument for completeness. Also, note that the graph we constructed is a DAG. 

\begin{proof}[Proof of \cref{thm:weighted-lower-bound}]
    Given any $n$-node graph $G = (V, E, w)$ for which we need to compute $d_{\ell}(i, j)$ for $i, j \in V$, $\ell \in [n]$, we can construct a new graph $G'$ that is a copy of $G$ but we add $-M$ to the weight of every edge, for some sufficiently large value $M$ (say $M$ is greater than $2n$ times the maximum absolute value of all edge weights). Then suppose we compute $d'_{\le \ell}(i, j)$, which is the distance from $i$ to $j$ in $G'$ using at most $\ell$ hops. As $M$ is large enough, the shortest such path must use exactly $\ell$ edges (assuming such a path exists), so we can add $\ell M$ to $d'_{\le \ell}(i, j)$ to recover $d_{\ell}(i, j)$. We will also be able to detect the case where there is no path with exactly $\ell$ hops from $i$ to $j$, as in that case, we must have $d'_{\le \ell}(i, j) + \ell M \ge M / 2$, which is impossible if such a path exists. 
\end{proof}

\bibliographystyle{alpha}
\bibliography{ref}

\appendix
\section{Lower Bound from Triangle Detection to Single Pair All-Hop Shortest Paths.}
Here we show that Single Pair All-Hops Shortest Paths requires $n^{\omega-o(1)}$ time, assuming that Triangle Detection does. The proof is similar in spirit to proofs in prior work (See \cite{focs10,KoAd23,BringmannGKL24}).

\begin{theorem}
If one can solve the Single Pair All-Hops Shortest paths Problem in graphs with $n$ vertices and edge weights in $\{-1,1\}$ in $O(n^{\omega-\eps})$ time for some $\eps>0$, then one can also find a triangle in an $n$ node graph in $O(n^{\omega-\eps})$ time.
\end{theorem}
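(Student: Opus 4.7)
The plan is a direct reduction from Triangle Detection on an $n$-vertex graph $G=(V,E)$ to Single Pair All-Hops Shortest Paths on a layered DAG $H$ with $n' = O(n)$ vertices and edge weights in $\{-1,+1\}$. I would construct $H$ as follows: a \emph{source chain} $s=v_1\to v_2\to\cdots\to v_n$; a layer $A=\{a_i\}_{i\in V}$ with an \emph{attachment} edge $v_i\to a_i$ for each $i$; three middle layers $B=\{b_j\}$, $C=\{c_k\}$, $A'=\{a'_{i'}\}$ with \emph{middle} edges $a_i\to b_j$ for $(i,j)\in E$, $b_j\to c_k$ for $(j,k)\in E$, and $c_k\to a'_{i'}$ for $(k,i')\in E$; attachment edges $a'_{i'}\to u_{i'}$; and a \emph{sink chain} $u_1\to u_2\to\cdots\to u_n=t$. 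All chain and attachment edges receive weight $-1$; all middle edges receive weight $+1$. The graph has $6n$ vertices and is strictly layered, so it is a DAG with no negative cycles.

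Because $H$ is strictly layered, every $s$-to-$t$ path must have the form $s=v_1\to\cdots\to v_i\to a_i\to b_j\to c_k\to a'_{i'}\to u_{i'}\to\cdots\to u_n=t$ for some $i,j,k,i'$ with $(i,j),(j,k),(k,i')\in E$. A direct accounting shows such a path uses exactly $n+4+(i-i')$ hops and has weight exactly $2-n+(i'-i)$.

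The key claim is that $G$ contains a triangle if and only if $d_{\le n+4}(s,t)=2-n$ in $H$. Every path counted by $d_{\le n+4}$ satisfies $i\le i'$, so its weight is at least $2-n$, with equality precisely when $i=i'$. Equality at $i=i'$ forces $(i,j),(j,k),(k,i)\in E$, i.e., a triangle; conversely, any triangle $(i,j,k)$ yields a length-$(n+4)$ path of weight $2-n$. Running the assumed $O((n')^{\omega-\varepsilon})$ Single Pair All-Hops algorithm on $H$ (with $n'=6n$) returns all of $d_{\le h}(s,t)$, and checking whether $d_{\le n+4}(s,t)=2-n$ decides triangle existence in $G$ in $O(n^{\omega-\varepsilon})$ total time.

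The construction itself is standard, so the main ``obstacle'' is simply choosing the weight pattern and the target hop length $n+4$ so that the quantity $2-n+(i'-i)$ uniquely attains its minimum $2-n$ exactly at $i=i'$ among paths with at most $n+4$ hops. With the sign pattern above (skeleton edges $-1$, middle edges $+1$) this is forced by elementary arithmetic, but a different pattern (for instance all edges $+1$) would either make shorter paths strictly cheaper and collapse the triangle signal, or over-count. I expect the rest of the argument (layering, DAG property, $O(n)$ vertex count, and the fact that $H$ contains no negative cycles) to be essentially automatic.
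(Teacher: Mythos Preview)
Your proposal is correct and takes essentially the same approach as the paper: both build a layered DAG with a source chain, three ``crossing'' layers encoding the edges of $G$, and a sink chain, with weights $-1$ on the chains and $+1$ on the crossings, so that a path with at most $n+4$ hops has weight exactly $2-n$ iff it witnesses a triangle. The only cosmetic difference is that you insert two extra attachment layers $A,A'$ (yielding $6n$ vertices), whereas the paper lets the chain vertices themselves play the role of the first and last copies of $V$ (yielding $4n+2$ vertices); this does not affect the argument or the asymptotics.
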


\begin{proof}
Let $H=(V_H,E_H)$ be a given graph in which we want to find a triangle.
Without loss of generality, $H$ is tripartite on parts $I,J,K$ on $n$ nodes each and we want to determine whether there are $i\in I,j\in J, k\in K$ such that $(i,j),(j,k),(k,i)\in E_H$.

We will create a graph $G=(V,E,w)$ whose weights are in $\{-1,1\}$ and $s,t\in V$ such that given $d_{\leq h}(s,t)$ for all $h$ we can determine whether $H$ contains a triangle.

First, create two copies of $I$: $I_1$ and $I_2$. Then create a copy $J'$ of $J$ and $K'$ of $K$. Finally, create two nodes $s$ and $t$.

The vertices of $G$ are $V=I_1\cup J'\cup K'\cup I_2\cup \{s,t\}$.

Let the nodes of $I$ in $H$ be $i_1,\ldots, i_n$. Call their copies in $I_j$ (for $j=1,2$) $i^j_1,\ldots,i^j_n$.

We will now describe the directed edges of $G$.

First, there is a path $P_1:=s\rightarrow i^1_1\rightarrow i^1_2\rightarrow \ldots \rightarrow i^1_n$ and all the weights of the edges in the path are $-1$.

Similarly, there is a path $P_2:=i^2_1\rightarrow i^2_2\rightarrow \ldots \rightarrow i^2_n\rightarrow t$ and all the weights of the edges in the path are $-1$.

For every edge $(i_p,j)$ of $H$ with $i_p\in I,j\in J$, we have the edge $(i_p^1,j')$ where $j'$ is the copy of $j$ in $J'$. 
For every edge $(j,k)$ of $H$ with $j\in J, k\in K$, we have the edge $(j',k')$ where $j'$ and $k'$ are the copies of $j$ in $J'$ and $k$ in $K'$ respectively.
For every edge $(k,i_p)$ of $H$ with $i_p\in I,k\in K$, we have the edge $(k',i_p^2)$ where $k'$ is the copy of $k$ in $K'$. The weights of all these edges are $1$.

Now let us consider a path from $s$ to $t$ in $G$.
As $G$ is a DAG, the only types of paths look like this:
They first follow the path $P_1$ from $s$ to some node $i^1_p$, then they go across using three edges from $i^1_p$ through $J'$ and $K'$ and then to some $i^2_q$ and then down the path $P_2$ all the way to $t$.

For such a path, the number of hops is 
$$p+3+(n+1-q)=(n+4)+(p-q).$$
The weight of the path is
$$-p+3-(n+1-q)=(2-n)-(p-q).$$

If we focus on $d_{\leq n+4}(s,t)$, then the paths of consideration must have $p\leq q$.
Meanwhile, to minimize the weight over all such paths, one must maximize $(p-q)$ subject to $p\leq q$. Thus the minimum possible weight $2-n$ is achieved for $p=q$, and this is possible if and only if there exists some $p$ such that there are $j\in J,k\in K$ and edges $(i_p,j),(j,k),(k,i_p)$ in $H$. That is, $d_{\leq n+4}(s,t)=2-n$ if and only if there is a triangle in $H$ (and it is larger otherwise).
\end{proof}
\end{document}